\documentclass[journal]{IEEEtran}

\usepackage{setspace}
\usepackage{lettrine}

%\doublespacing
%\singlespacing
%\onehalfspacing

\usepackage{graphics,
           psfrag,
           epsfig,
           amsthm,
           cite,
           amssymb,
           url,
           dsfont,
           subfigure,
           algorithm,
           algorithmic,
           balance,
           enumerate,
           color,
           setspace
           %algorithm2e
}
\usepackage{amsmath}%[centertags][tbtags][sumlimits][nosumlimits][intlimits][namelimits][nonamelimits]

\usepackage{epstopdf}

\newtheorem{definition}{Definition}

\newtheorem{lemma}{Lemma}

\newtheorem{theorem}{Theorem}

\newtheorem{preposition}{Preposition}

\newcommand{\eref}[1]{(\ref{#1})}
\newcommand{\sref}[1]{Section~\ref{#1}}

\newcommand{\appref}[1]{Appendix~\ref{#1}}
\newcommand{\fref}[1]{Figure~\ref{#1}}

\newcommand{\cref}[1]{Constraint~\ref{#1}}
\newcommand{\thref}[1]{Theorem~\ref{#1}}

\newcommand{\lref}[1]{Lemma~\ref{#1}}

%%\newcommand{\Eref}[1]{Equation (\ref{#1})} %%\newcommand{\Sref}[1]{Section~\ref{#1}}
%%\newcommand{\Cref}[1]{Chapter~\ref{#1}}

%%\newcommand{\Tref}[1]{Table~\ref{#1}}
%%\newcommand{\tbl}[1]{\caption{#1}}

% correct bad hyphenation here
\hyphenation{op-tical net-works semi-conduc-tor}

\newcommand{\ignore}[1]{}

%\renewcommand{\theequation}{\thesection.\arabic{equation}}%labels the equations by their section
%\numberwithin{equation}{section}% separate labeling of each section from the other

\epsfxsize=3.0in
\pagestyle{plain}%page style can be "plain","empty","headings"or"myheadings".
%"plain" is the default one for "article" and "report" document style. It numbers the pages.
%"empty" there is no numbering for pages.
%"headings" the page number and any other information provided by the document style is put at the top of the page.
%"myheadings" the same as "headings", except that the material to go at the top pf the page is determined by \markboth
%and \markright commands.
\IEEEoverridecommandlockouts
%\doublespacing
%*****************different kinds of text's shape**************
%\textup{...}
%\emph{...}
%\textit{...}
%\textbf{...}
%\textsf{...}
%\textsc{...}
%\texttt{...}
%\textrm{...}
%\textsl{...}
%All of the commands that begin with \text should be changed to begin with \math when applying them to
%mathematical expressions.
%
%**************************************************************
%*************Font sizes***************************************
%{\tiny ...},{\scriptsize ... },{\footnotesize ...},{\small ....},{\normalsize ...},{\large ...},
%{\Large ...},{\LARGE ...},{\huge ...},{\Huge ...}
%**************************************************************
%*************Quote command************************************
%`'
%\begin{quote}\end{quote}
%\begin{quotation}\end{quotation}
%**************************************************************

%\pagestyle{empty}

\addtolength{\textfloatsep}{-2mm}
\setlength{\abovedisplayskip}{0.5mm}
\setlength{\belowdisplayskip}{0.5mm}
\setlength{\abovecaptionskip}{0.5mm}
\setlength{\belowcaptionskip}{0.5mm}
\setlength{\floatsep}{1mm}

\begin{document}

\title{\vspace{-0.5cm}Data Dissemination using Instantly Decodable Binary Codes in Fog-Radio Access Networks}

\author{
   \authorblockN{Ahmed Douik, \textit{Student Member, IEEE} and Sameh Sorour, \textit{Senior Member, IEEE}}
    
    {\thanks {    
Ahmed Douik is with the Department of Electrical Engineering, California Institute of Technology, Pasadena, CA 91125 USA (e-mail: ahmed.douik@caltech.edu).

Sameh Sorour is with the Department of Electrical and Computer Engineering, University of Idaho, Moscow, ID 83844, USA (e-mail: samehsorour@uidaho.edu).
}}
\vspace{-0.8cm}    }

\maketitle

\IEEEoverridecommandlockouts

\begin{abstract}
Consider a device-to-device (D2D) fog-radio access network wherein a set of devices are required to store a set of files. Each device is connected to a subset of the cloud data centers and thus possesses a subset of the data. This paper investigates the problem of disseminating all files among the devices while reducing the total time of communication, i.e., the completion time, using instantly decodable network coding (IDNC). While previous studies on the use of IDNC in D2D systems assume a fully connected communication network, this paper tackles the more realistic scenario of a partially connected network in which devices can only target devices in their transmission range. The paper first formulates the optimal joint optimization of selecting the transmitting device(s) and the file combination(s) and exhibits its intractability. The completion time is approximated using the celebrated decoding delay approach by deriving the relationship between the quantities in a partially connected network. The paper introduces the cooperation graph and demonstrates that the relaxed problem is equivalent to a maximum weight clique problem over the newly designed graph wherein the weights are obtained by solving a similar problem on the local IDNC graphs. Extensive simulations reveal that the proposed solution provides noticeable performance enhancement and outperforms previously proposed IDNC-based schemes.
\end{abstract}

\begin{keywords}
Fog computing, data dissemination, partially connected networks, instantaneous codes, decoding delay control.
\end{keywords}

\section{Introduction} 

\lettrine[lines=2]{W}{ith} their increased popularity and their abundance in radio access networks, smartphones are becoming active parts of the system. While traditional cell phones were considered as mere terminals by the service providers, the increased computation power and storage capacity of smartphones are turning them into active components of the network. For example, to support a massive number of devices and further reduce latency, the notion of device-to-device (D2D) \cite{6805125} communication has been proposed as a potential candidate for the next generation mobile radio system ($5$G) \cite{6736746}. Similarly, multiple works investigate the use of smartphones are potential relays in the network. 

Lately, the notion of fog computing \cite{7513863,7537176} emerged as a new paradigm in radio access network in which not only the communication and computing resources of the mobile devices are exploited but also their storage capacity. Such paradigm shift allows not only to save the data center resources but also to have fast access to the files and thus to meet the ever increasing data rates and the Quality of Service (QoS) requirements \cite{6824752}. As for all new notions, the definition of fog networking and computing is still ambiguous and does not make consensus in the literature, e.g., \cite{7727971,7776570,7558153}. This paper considers fog computing from the storage perspective in which the cloud data centers disseminate files in the network for faster access.

By exploiting the computation abilities of the intermittent nodes in the network, Network Coding (NC) has shown remarkable abilities in significantly improving the network capacity and reducing the delay of wireless broadcast configurations \cite{19518321}. For D2D systems in which devices exchange packets over a short range and possibly more reliable channels, NC is a suitable complementary solution \cite{6404743} to provide reliable and secure data communications over ad-hoc networks such that Internet of Things (IoT) and wireless sensor networks.

While random NC schemes require computationally expensive matrix inversion, Instantly Decodable Network Coding (IDNC) \cite{5425315} is an important subclass of NC that is suitable for battery-powered D2D communications. IDNC provides an incredibly fast, or as it name indicates instantly, encoding and decoding through simple binary XOR operations which are particularly well adapted for the network of interest in this paper wherein devices are highly limited in terms of computation complexity. Besides, IDNC provides progressive decoding which is a fundamental feature that makes files ready-to-use from their reception instant. For its aforementioned benefits, IDNC is employed in various settings \cite{5425315,4313060,20112430,6570827,6620795,6503457,6970860}.

Consider a D2D fog-radio access network (F-RAN) wherein a set of devices are required to store a set of files. Each device is connected to a subset of the cloud data centers and thus possesses a subset of the data. This paper investigates the problem of disseminating all files among the devices while reducing the total time of communication, i.e., the completion time using IDNC. While previous studies on the use of IDNC in D2D systems assume a fully connected communication network, this paper tackles the more realistic scenario of a partially connected network in which devices can only target devices in their transmission range. However, the assumption of a global coordinator in the network is preserved and can be alleviated in future work using a game theoretical approach similar to the one proposed in \cite{14122014}.

Reducing the number of transmissions is intractable due to the dynamic nature of the channels. Various approximations of the completion time have been suggested in the literature among which the decoding delay approach in \cite{13051412} that allowed reducing the completion time below its best-known bound. Similar to the completion time, finding the optimal schedule for decoding delay minimization is intractable \cite{6503457,5205612}. However, the authors in \cite{6620795} propose an efficient on-line decoding delay minimization scheme. This paper suggests using a similar approach by deriving the relationship between the completion time and the decoding delay in the partially connected network of interest and using the decoding delay expressions provided in \cite{5456269} to obtain an online completion time minimization scheme.

The paper's main contribution is to propose an efficient method for disseminating the files among the devices on a partially connected D2D F-RAN. The joint optimization over the set of transmitting devices and data combinations so as to reduce the number of transmissions is first formulated and shown to be intractable. Due to the intractability of the completion time, the paper proposes approximating the metric by deriving its relationship to the decoding delay. Finally, using the expressions of the decoding delay in the literature, the paper designs the cooperation graph and shows that the relaxed problem is equivalent to a multi-layer maximum weight clique problem. Simulation results reveal that the proposed solution outperforms previously IDNC-based schemes in partially connected communication systems. Due to space limitation, proofs, and additional simulations can be found in the online technical report \cite{5684518}.

The rest of this paper is organized as follows: \sref{sec:sys} presents the system model and relevant definitions. The completion time is expressed, and the problem is formulated in \sref{sec:com}. In \sref{sec:int}, the cooperation graph is constructed, and the solution to the collision-free scenario is suggested. The solution is extended to the general cooperation in \sref{sec:gen} Finally, before concluding in \sref{sec:con}, simulation results are provided in \sref{sec:sim}.

\section{System Model and Definitions} \label{sec:sys}

\subsection{System Model and Parameters}

Consider a D2D F-RAN consisting of a set $\mathcal{U}$ of $U$ devices. Initially available at the data centers, the central controller aims to store $F$ files (denoted by the set $\mathcal{F}$) at all devices. Each device $u$ is connected to some data centers from which it obtained a subset of the files $\mathcal{H}_u \subseteq \mathcal{F}$. Call the missing files at the $u$-th device its Wants set and denote it by $\mathcal{W}_u$. The central controller aims to design a transmission protocol so that each device receives a copy of all its missing files. For devices to be able to receive all data, each file is assumed to be possessed by at least one device, i.e., $f \in \bigcup_{u \in \mathcal{U}} \mathcal{H}_u, \ \forall \ f \in \mathcal{F}$.

The network topology is captured by a symmetric, unit diagonal, $U \times U$ connectivity matrix $\mathbf{C}=[c_{uu^\prime}]$ wherein the entry $c_{uu^\prime}$ is equal to $1$ if devices $u$ and $u^\prime$ are in the transmission range of each others. Furthermore, the paper assumes that the network (and thus the matrix $\mathbf{C}$) is connected, i.e., each device can target any other device through a single or a multi-hop transmission. If some part of the network is disconnected, it can be considered as an independent network and optimized separately. The coverage zone $\mathcal{C}_u$ of the $u$-th device is defined as the set of devices in the transmission range of the $u$-th device. In other words, $\mathcal{C}_u$ is defined by:
\begin{align}
\mathcal{C}_u = \{ u^\prime \in \mathcal{U} \ | \ c_{uu^\prime}=1\}.
\end{align}

The paper considers that the D2D transmissions are subject to independent but not necessarily identical erasures. The erasure probabilities are represented by the zeros diagonal $U \times U$ matrix $\mathbf{E}=[\epsilon_{uu^\prime}]$ wherein $\epsilon_{uu^\prime}$ represents the probability that the transmission from the $u$-th device is erased at the $u^\prime$-th device. The erasures probabilities are assumed to be known by the central unit and to remain constant during the transmission of a single file combination. Due to the asymmetry of the channels and to a potential difference in the transmit power, the erasures $\epsilon_{uu^\prime}$ and $\epsilon_{u^\prime u}$ are not necessarily equal.

Devices cooperate to complete the reception of all files by exchanging XOR-encoded files to devices in their transmission range.  This paper assumes the central controller has full knowledge of the distribution of lost and received files at each device which can be accomplished by the exchange of positive and negative acknowledgments (ACKs and NACKs) through a dedicated feedback channel. 

\subsection{Definitions and Notation}

This subsection gathers the relevant definitions and notation used throughout the paper. Let $\mathcal{S}$ denote a schedule formed from the transmitting devices and the file combination for each time slot. The paper aims to find the schedule $\mathcal{S}$ that minimizes the total number of transmissions, known as the completion time $\mathcal{T}$ and defined as follows:
\begin{definition}[Individual Completion Time]
The individual completion time $\mathcal{T}_u(\mathcal{S})$ of the $u$-th device experienced by following the schedule $\mathcal{S}$ is the number of transmissions required until the device obtains all its missing files.
\end{definition}
\begin{definition}[Completion Time]
The overall completion time $\mathcal{T}(\mathcal{S})$ experienced by following the schedule $\mathcal{S}$ is the number of transmissions required until all devices obtains their missing files. In other words, $\mathcal{T}(\mathcal{S}) = \max_{u \in \mathcal{U}} \mathcal{T}_u(\mathcal{S})$
\end{definition}

\begin{figure}[t]
\centering
\includegraphics[width=0.4\linewidth]{./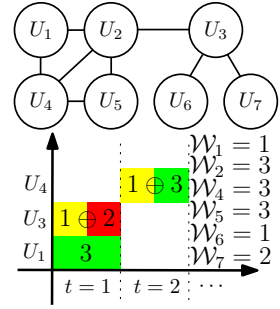}\\
\caption{An example of a schedule in a partially connected D2D-enable network composed of $7$ devices and $3$ files. In the first time slot both devices $U_1$ and $U_3$ transmit. In the second time slot device $U_4$ transmits.}\label{fig:network}
\end{figure}

\fref{fig:network} represents an example of a schedule in a partially connected D2D F-RAN. Unlike fully connected D2D systems, multiple devices are able to transmit simultaneously. The individual completion time of devices $U_6$ and $U_7$ is one unit. However, the overall completion time is $2$ units so as to satisfy all devices.

Inspired by the work in \cite{13051412}, this paper employs a decoding delay approach to efficiently reduce the completion time. To define the decoding delay, first introduce the different reception options for the $u$-th device as follows:
\begin{itemize}
\item \emph{Instantly Decodable:} A file is instantly decodable if it allows the device to recover one of its missing files. Given that encoding is based solely on XOR operations, a file combination is instantly decodable if it contains exactly one file from $\mathcal{W}_u$.
\item \emph{Not Instantly Decodable:} A file is not instantly decodable if it is either non-decodable or previously received. Hence, a file is not instantly decodable if it does not contain exactly a single file from $\mathcal{W}_u$.
\end{itemize}

The decoding delay \cite{5456269} is defined as follows:
\begin{definition}[Decoding Delay]
The decoding delay $\mathcal{D}_u(\mathcal{S})$ of the $u$-th device, with non-empty Wants sets, increases by one unit if the device is not able to hear exactly a single transmission or if it hears a non-instantly decodable file combination in the schedule $\mathcal{S}$.
\end{definition}

The rationale behind the definition of the decoding delay is that it only accounts for delays caused by the transmitting devices and the coding decisions. Hence, the channel erasures are not considered in the definition. In \fref{fig:network}, the decoding delay is computed as follows:
\begin{itemize}
\item Device $U_1$ experience one unit of delay as it transmits in the first time slot. In the second time slot, it receives an instantly decodable file and hence no additional delay. 
\item Device $U_2$ and $U_5$ experience one unit of delay as $U_2$ is in interference and $U_5$ is out of the transmission range during the first time slot. Hence, they cannot hear exactly one transmission.
\item Device $U_3$ and $U_4$ do not experience any delay as $U_3$ has an empty Wants set and $U_4$ also has an empty Wants set when it transmits.
\end{itemize}

The notation used in the paper are the following. Matrices are represented by bold upper case characters, e.g., $\mathbf{X}$. The entry at the $i$-th row and $j$-th column of $\mathbf{X}$ is denoted by $x_{ij}$. Sets are indicated by calligraphic letters, e.g., $\mathcal{X}$. The notation $\overline{\mathcal{X}}$ and $|\mathcal{X}|$ represent the complement and the cardinal of the set $\mathcal{X}$. The power set of $\mathcal{X}$ is represented by $\mathcal{P}(\mathcal{X})$. 

\section{Completion Time Expression} \label{sec:com}

This section first formulates the completion time minimization problem in partially connected D2D F-RAN. Due to the intractability of minimizing the completion time, the section proposes approximating the completion time by a more tractable metric known as the anticipated completion time that matches the genuine completion time for a large number of decoding delay-free transmissions. Using the anticipated completion time, the problem is reformulated as a joint online optimization over the set of transmitting devices and file combinations.

\subsection{Problem Formulation and Completion Time Expression}

The completion time problem is the one of finding the set of transmitting devices and file combinations for each time slot so as to minimize the number of transmissions. Formally, the problem is expressed as follows:
\begin{align}
\mathcal{S}^* &= \arg\min_{\mathcal{S} \in \mathbf{S}} \mathcal{T}(\mathcal{S}) = \arg\min_{\mathcal{S} \in \mathbf{S}} \left\{\underset{u \in \mathcal{U}}{\text{max }} \mathcal{T}_u(\mathcal{S})\right\},
\label{eq:1}
\end{align}
where $\mathbf{S}$ is the set of all feasible schedules. The optimization problem \eref{eq:1} is intractable as it depends on future channel realizations. Furthermore, even for erasure-free transmission, the search space $\mathbf{S}$ is prohibitively huge for any moderate sized network and number of files \cite{6775017}. 

In order to efficiently reduce the completion time with a reasonable complexity, the following lemma suggests re-expressing it using an expression involving the decoding delay:
\begin{lemma}
The individual completion time $\mathcal{T}_u(\mathcal{S}$ of the $u$-th can be expressed as follows:
\begin{align}
\mathcal{T}_u(\mathcal{S}) = |\mathcal{W}_u| + \mathcal{D}_u(\mathcal{S}) + \mathcal{E}_u(\mathcal{S}),
\end{align}
where $\mathcal{E}_u(\mathcal{S})$ is the number of erased files during the transmission of the schedule $\mathcal{S}$.
\label{l1}
\end{lemma}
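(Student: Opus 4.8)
The plan is to fix a device $u$ and a schedule $\mathcal{S}$ that eventually satisfies $u$, and to account individually for each of the $\mathcal{T}_u(\mathcal{S})$ transmissions that elapse before $u$'s Wants set becomes empty. If $\mathcal{W}_u=\emptyset$ initially then $\mathcal{T}_u(\mathcal{S})=\mathcal{D}_u(\mathcal{S})=\mathcal{E}_u(\mathcal{S})=0$ and the claim is trivial, so assume $\mathcal{W}_u\neq\emptyset$. First I would classify every slot $t\in\{1,\dots,\mathcal{T}_u(\mathcal{S})\}$ into exactly one of three types, according to the scheduling/coding decisions of $\mathcal{S}$ and the erasure outcome in that slot: \emph{(i) progress slots}, where (per the transmission decisions) exactly one device in $\mathcal{C}_u$ transmits, $u$ is not itself transmitting, the transmitted combination is instantly decodable for $u$ (it contains exactly one file of $u$'s current Wants set), and the transmission is not erased at $u$; \emph{(ii) decoding-delay slots}, which are exactly the slots counted by $\mathcal{D}_u(\mathcal{S})$, i.e. those in which $u$ cannot hear exactly one transmission (it transmits, is silenced by a collision, or is out of range) or hears a non-instantly-decodable combination; and \emph{(iii) erasure slots}, the remaining ones, where $u$ was scheduled to receive a single instantly-decodable combination that was erased at $u$.

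Next I would verify that these three types partition $\{1,\dots,\mathcal{T}_u(\mathcal{S})\}$. That (i) and (ii) are disjoint and that (ii) is read off directly from the Decoding Delay definition is immediate; the only subtle point is the boundary between (ii) and (iii), which relies on the discussion accompanying that definition: $\mathcal{D}_u$ is evaluated on the transmission and coding decisions of $\mathcal{S}$ and does \emph{not} register channel erasures, so a slot whose lone useful transmission is erased is not a decoding-delay slot; it is precisely a type-(iii) slot, and each such slot is counted once in $\mathcal{E}_u(\mathcal{S})$. Consequently $\mathcal{T}_u(\mathcal{S})$ equals the number of type-(i) slots plus $\mathcal{D}_u(\mathcal{S})$ plus $\mathcal{E}_u(\mathcal{S})$.

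Then I would show that the number of type-(i) slots is $|\mathcal{W}_u|$. Each type-(i) slot delivers to $u$ exactly one currently-missing file (instant decodability with a single Wants file), hence removes exactly one element from its Wants set, whereas type-(ii) and type-(iii) slots deliver nothing new and leave the Wants set unchanged. Since $\mathcal{T}_u(\mathcal{S})$ is by definition the first instant at which $u$'s Wants set is emptied and it initially has $|\mathcal{W}_u|$ files, exactly $|\mathcal{W}_u|$ of the first $\mathcal{T}_u(\mathcal{S})$ slots are of type (i). Substituting gives $\mathcal{T}_u(\mathcal{S})=|\mathcal{W}_u|+\mathcal{D}_u(\mathcal{S})+\mathcal{E}_u(\mathcal{S})$.

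The step I expect to be the main obstacle is the bookkeeping at the interface between $\mathcal{D}_u$ and $\mathcal{E}_u$: making rigorous the intended convention that the decoding delay is a function of the schedule's (erasure-free) transmission decisions, so that an erased useful transmission is charged to $\mathcal{E}_u$ rather than to $\mathcal{D}_u$, with no double counting, and, symmetrically, that an erased transmission which would not have been useful is charged only to $\mathcal{D}_u$. A secondary care point is the meaning of ``hears exactly one transmission'' in the partially connected model with simultaneous transmitters, namely ``exactly one scheduled transmitter lies in $\mathcal{C}_u$ and $u$ is not transmitting,'' with collisions and out-of-range events falling under type (ii).
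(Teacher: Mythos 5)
Your overall strategy is the same as the paper's: account slot-by-slot over the first $\mathcal{T}_u(\mathcal{S})$ transmissions, observe that exactly $|\mathcal{W}_u|$ of them are successful instantly decodable receptions (the last one included), and charge each remaining slot to exactly one of $\mathcal{D}_u$ or $\mathcal{E}_u$. However, the bookkeeping subtlety you yourself identify as the main obstacle is resolved in the opposite way from the paper. In the paper's accounting, whenever the device hears a single transmission that is erased, the slot is charged to $\mathcal{E}_u$ \emph{regardless of whether the combination would have been instantly decodable}; the decoding delay increments only when a single transmission is successfully received and is not instantly decodable (or when the device cannot hear exactly one transmission). You instead charge an erased non-decodable transmission to $\mathcal{D}_u$ (``an erased transmission which would not have been useful is charged only to $\mathcal{D}_u$'') and restrict $\mathcal{E}_u$ to erased instantly decodable transmissions. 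The paper's convention is not arbitrary: it is the one consistent with the decoding-delay distribution of \cite{5456269} used in the proof of \thref{th2}, where a device that hears the transmission but is not targeted satisfies $\mathds{P}[d_u=0]=\epsilon_{au}$, i.e., an erased non-decodable transmission produces \emph{no} decoding delay.

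Because your convention merely moves the erased-and-non-decodable slots from one bucket to the other, the sum $\mathcal{D}_u+\mathcal{E}_u$ is unchanged, so the displayed identity still comes out — but only after silently redefining both terms; pairing the paper's $\mathcal{D}_u$ with your restricted $\mathcal{E}_u$ (or your $\mathcal{D}_u$ with the paper's $\mathcal{E}_u$, which counts every erased heard file) would make the equation false by under- or over-counting those slots. The distinction matters downstream: \thref{th1} approximates $\mathcal{E}_u$ as a sum of Bernoulli variables whose parameter is the channel erasure probability of whichever neighbor transmits, independently of the coding decision, which is only valid when $\mathcal{E}_u$ counts all erased heard transmissions, not just the erased useful ones. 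The fix is local: redefine your type (iii) as ``a single transmission is heard and it is erased'' (decodable or not) and your type (ii) as ``cannot hear exactly one transmission, or successfully receives a non-instantly-decodable combination.'' With that flip, the partition argument, the count of $|\mathcal{W}_u|$ progress slots, and the treatment of the final slot coincide with the paper's proof.
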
 

\begin{proof}
The lemma is demonstrated by identifying all possible transmissions at the $u$-th device and translating their effect on the completion time. The complete proof can be found in \appref{app1}.
\end{proof}

The following theorem exploits the expression in \lref{l1} to approximate the completion time.
\begin{theorem}
The individual completion time experienced by the $u$-th device after the transmission of the schedule $\mathcal{S}$ can be approximated as follows:
\begin{align}
\mathcal{T}_u(\mathcal{S}) = \cfrac{|\mathcal{W}_u| + \mathcal{D}_u(\mathcal{S}) - \mathds{E}[\epsilon_u]} {1-\mathds{E}[\epsilon_u]},
\label{eq:3}
\end{align}
where $\mathds{E}[\epsilon_u]$ is the expected erasure experienced by the $u$-th device from all transmitting ones. Although one can update online the expected erasure after each transmission, simulation results reveal that an efficient approximation of the expected erasure is given by $\overline{\epsilon}_u = \cfrac{ \sum\limits_{u^\prime \in \mathcal{C}_u}\epsilon_{u^\prime u }}{|\mathcal{C}_u|}$ as it represents a scenario in which all devices are equality likely to transmit.
\label{th1}
\end{theorem}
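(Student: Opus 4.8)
The plan is to start from the exact identity in Lemma~\ref{l1}, namely $\mathcal{T}_u(\mathcal{S}) = |\mathcal{W}_u| + \mathcal{D}_u(\mathcal{S}) + \mathcal{E}_u(\mathcal{S})$, and to replace the random quantity $\mathcal{E}_u(\mathcal{S})$, the number of file erasures seen by device $u$, by its expectation conditioned on the decoding-delay-free transmissions that device $u$ actually listens to. The key observation is that device $u$ only accumulates ``useful'' (non-delay) reception opportunities on a subset of the time slots, and on each such slot the probability that the combination is erased at $u$ is $\mathds{E}[\epsilon_u]$, the erasure averaged over whichever devices are transmitting toward $u$ in that slot. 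Writing $N_u$ for the number of transmissions in $\mathcal{S}$ that are heard (or would be heard, absent erasure) by $u$ as the unique transmitter in its neighbourhood, we have by linearity of expectation $\mathds{E}[\mathcal{E}_u(\mathcal{S})] = N_u \, \mathds{E}[\epsilon_u]$, while the number of genuinely received innovative files is $N_u(1-\mathds{E}[\epsilon_u])$, which must equal $|\mathcal{W}_u|$ once all wanted files have arrived.

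From here the algebra is short. Since the erased transmissions are exactly the ``wasted'' useful slots, $\mathcal{E}_u(\mathcal{S}) = N_u - |\mathcal{W}_u|$ in expectation, and substituting $N_u = |\mathcal{W}_u|/(1-\mathds{E}[\epsilon_u])$ gives $\mathcal{E}_u(\mathcal{S}) = |\mathcal{W}_u|\,\mathds{E}[\epsilon_u]/(1-\mathds{E}[\epsilon_u])$. Plugging this into the Lemma~\ref{l1} identity,
\begin{align}
\mathcal{T}_u(\mathcal{S}) = |\mathcal{W}_u| + \mathcal{D}_u(\mathcal{S}) + \frac{|\mathcal{W}_u|\,\mathds{E}[\epsilon_u]}{1-\mathds{E}[\epsilon_u]} = \frac{|\mathcal{W}_u| + \mathcal{D}_u(\mathcal{S})\bigl(1-\mathds{E}[\epsilon_u]\bigr)}{1-\mathds{E}[\epsilon_u]},
\end{align}
and a final simplification — using that the delay slots are themselves subject to the same erasure statistics so that $\mathcal{D}_u(\mathcal{S})$ effectively contributes $\mathcal{D}_u(\mathcal{S})/(1-\mathds{E}[\epsilon_u])$ reception attempts, which merges the $-\mathds{E}[\epsilon_u]$ correction term — yields the stated form \eqref{eq:3}. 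I would present the constant $\overline{\epsilon}_u$ as the natural estimator of $\mathds{E}[\epsilon_u]$ under a uniform-transmitter prior, which is the ``all devices equally likely to transmit'' scenario mentioned in the statement; this part is a modelling remark rather than something to be proved.

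The main obstacle, and the place where the argument is genuinely an approximation rather than an identity, is the treatment of $\mathds{E}[\epsilon_u]$: across different time slots the set of devices transmitting toward $u$ changes, so the per-slot erasure probability is not constant and the single scalar $\mathds{E}[\epsilon_u]$ is only a time-average. I would be careful to state that the derivation holds exactly in expectation when the per-slot erasure probability is replaced by its schedule-averaged value, and that the collapse of the expression into the clean form \eqref{eq:3} — in particular absorbing the delay term's own retransmission overhead — relies on the regime of a large number of decoding-delay-free transmissions, exactly as flagged in the section preamble. A secondary point to handle cleanly is the boundary case $\mathcal{W}_u = \emptyset$, where the formula correctly returns $\mathcal{T}_u(\mathcal{S}) = -\mathds{E}[\epsilon_u]/(1-\mathds{E}[\epsilon_u]) + \mathcal{D}_u(\mathcal{S})$; since a device with nothing to receive has $\mathcal{D}_u = 0$ by definition, one should note this degenerate value is harmless because such devices are excluded from the max in Definition~2 whenever any other device is still active.
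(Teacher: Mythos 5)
Your starting point (Lemma~\ref{l1}) and the intention to replace $\mathcal{E}_u(\mathcal{S})$ by a statistical estimate match the paper, but your accounting of which slots can be erased is flawed, and it produces a different closed form that you then patch with an unjustified ``final simplification.'' Concretely, you set $N_u(1-\mathds{E}[\epsilon_u]) = |\mathcal{W}_u|$, i.e., you assume every successfully received combination among the $N_u$ heard slots is innovative. This ignores successfully received but \emph{non-instantly-decodable} combinations, which are precisely the receptions counted in $\mathcal{D}_u$; with them included, $N_u(1-\mathds{E}[\epsilon_u])$ equals $|\mathcal{W}_u|$ plus the number of non-innovative receptions, so your identity $\mathcal{E}_u = N_u - |\mathcal{W}_u|$ fails. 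Carrying your bookkeeping through gives $\mathcal{T}_u = \bigl(|\mathcal{W}_u| + \mathcal{D}_u(1-\mathds{E}[\epsilon_u])\bigr)/(1-\mathds{E}[\epsilon_u])$, which differs from \eref{eq:3} by $\mathds{E}[\epsilon_u](\mathcal{D}_u-1)/(1-\mathds{E}[\epsilon_u])$ --- a term growing linearly in the decoding delay, not a negligible correction --- and the sentence claiming the delay slots ``effectively contribute $\mathcal{D}_u/(1-\mathds{E}[\epsilon_u])$ reception attempts'' is not a derivation; it is exactly the missing step.

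The paper's route avoids this. It first proves an auxiliary lemma (Lemma~\ref{l2}): under the collision-free assumption and neglecting out-of-range slots, in every time slot exactly one device of $\mathcal{C}_u$ transmits, each equally likely; hence the per-slot erasure indicator $\mathcal{X}_u(t)$ is Bernoulli with parameter $\mathds{E}[\epsilon_u]$, where the own-transmission case enters with conditional erasure probability $0$ (which is why the average is taken over all of $\mathcal{C}_u$). The law of large numbers, valid for large $F$, then gives $\mathcal{E}_u(\mathcal{T}_u(\mathcal{S})-1) \approx \mathds{E}[\epsilon_u]\,(\mathcal{T}_u(\mathcal{S})-1)$ over \emph{all} $\mathcal{T}_u(\mathcal{S})-1$ slots, not only your $N_u$ ``useful'' ones. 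Substituting into Lemma~\ref{l1} yields the self-consistent equation $\mathcal{T}_u = |\mathcal{W}_u| + \mathcal{D}_u + \mathds{E}[\epsilon_u](\mathcal{T}_u-1)$, whose solution is exactly \eref{eq:3}; the constant $-\mathds{E}[\epsilon_u]$ in the numerator comes from the $-1$ in that equation, not from any manipulation of $\mathcal{D}_u$. Note also that the uniform-transmitter claim you dismiss as a modelling remark is actually proved in the paper (from the symmetry of $\mathbf{C}$, the interference-free constraint $\mathcal{I}=\varnothing$, and $\mathcal{J}=\varnothing$), so a complete proof should include that argument rather than postulate it.
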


\begin{proof}
To show the theorem, the mean expression of the completion time is expended using the expected erasure probability. With such expression, the probability distribution of the erased files is derived, and their sum approximated using the law of large numbers. Finally, the expected erasure probability is shown to coincide with the $\overline{\epsilon}_u$ for collision-free transmission and uniform distribution of transmitting devices. The complete proof can be found in \appref{app2}.
\end{proof}

The rest of the paper uses the approximation in \thref{th1} as an equality as it holds for a large number of files, devices, and the collision-free scenario under investigation in this paper. It also neglects the set of devices out of the transmission range as the assumption holds for moderately connected networks.

\subsection{Online Completion Time Reduction}

The completion time reduction problem can be approximated using the expression provided in \thref{th1} as follows:
\begin{align}
\mathcal{S}^* &= \arg\min_{\mathcal{S} \in \mathbf{S}} \left\{\underset{u \in \mathcal{U}}{\text{max }} \cfrac{|\mathcal{W}_u| + \mathcal{D}_u(\mathcal{S}) - \mathds{E}[\epsilon_u]}{1-\mathds{E}[\epsilon_u]}\right\}
\end{align}

Even though the above expression is challenging to optimize, it allows to conclude that the term that affects the most the completion time is the maximum decoding delay. Hence, the philosophy of the proposed online solution is to reduce the probability of increase in the anticipated completion time defined at the $t$-th transmission as follows:
\begin{align}
\mathcal{T}_u(t) = \cfrac{|\mathcal{W}_u| + \mathcal{D}_u(t) - \mathds{E}[\epsilon_u]}{1-\mathds{E}[\epsilon_u]},
\label{eq:2}
\end{align}
where $\mathcal{D}_u(t)$ is the decoding delay experienced until the $t$-transmission. Note that the anticipated completion time \eref{eq:2} matches the genuine completion time \eref{eq:3} if the device does not experience any additional decoding delay for the remaining transmissions.

To reduce the probability of increase of the anticipated completion time, define $\mathcal{L}(t)$ as the critical devices \footnote{The set $\mathcal{L}(t)$ is denoted by $\mathcal{L}$ in the rest of the paper with the convention that the missing index represents the current transmission.} as those that can potentially increase the maximum anticipated completion time at the $t$-th transmission as: 
\begin{align}
&\mathcal{L}= \left\{u \in \tilde{\mathcal{U}} \bigg| \mathcal{T}_u(t-1) + \frac{1}{1-\mathds{E}[\epsilon_u]} \geq  \max_{u^\prime \in \tilde{\mathcal{U}}}\left(\mathcal{T}_{u^\prime}(t)\right)\right\}, \nonumber
\end{align}
where $\tilde{\mathcal{U}}$ is the set of devices with non-empty Wants set.

Let $\mathcal{A} \in \mathcal{P}(\mathcal{U})$ denote the set of transmitting devices, $\mathcal{I} \subset \tilde{\mathcal{U}}$ the set of devices in collision, i.e., can hear multiple transmissions, and  $\mathcal{J} \subset \tilde{\mathcal{U}}$ the set of devices out of the transmission range of the transmitting devices\footnote{Variables $\mathcal{I}$ and $\mathcal{J}$ are function of the set of transmitting devices. However, for ease of notation, the set $\mathcal{A}$ is dropped.}. The joint optimization over the transmitting devices $a \in \mathcal{A}$ and their file combinations $\kappa_a(\mathcal{A})$ is given in the following paper's main theorem:
\begin{theorem}
The set of transmitting devices and their file combinations that minimizes the probability of increase in the anticipated completion time is the solution to the following joint optimization problem:
\begin{subequations}
\label{eq:4}
\begin{align}
\mathcal{A}^* &= \arg \max_{ \mathcal{A} \in \mathcal{R}(\mathcal{L}) } \sum\limits_{a \in \mathcal{A}} y(\kappa_{a}^*(\mathcal{A})) \\
\text{subject to } \kappa_{a}^*(\mathcal{A}) &= \arg \max_{\kappa_{a}(\mathcal{A}) \in \mathcal{P}(\mathcal{H}_{a})} y(\kappa_{a}(\mathcal{A}))  \\
y(\kappa_{a}(\mathcal{A})) &= \sum\limits_{u \in \mathcal{L} \cap \tau_{a}(\kappa_{a}(\mathcal{A}))} \log \cfrac{1}{\epsilon_{au}}  
\end{align} 
\end{subequations}
where $\tau_{a}(\kappa_{a}^*(\mathcal{A}))$ is the set of targeted devices when device $a$ transmits the file combination $\kappa_{a}$, and $\mathcal{R}(\mathcal{L})$ is the set of feasible cooperation defined as follows:
\begin{align}
&\mathcal{R}(\mathcal{L}) = \{ \mathcal{A} \in \mathcal{P}(\mathcal{U}) | \mathcal{L} \cap (\mathcal{A} \cup \mathcal{I} \cup \mathcal{J} ) = \varnothing \}
\label{eq:8}
\end{align}
\label{th2}
\end{theorem}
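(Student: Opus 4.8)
The plan is to trace, for a fixed choice of transmitting devices $\mathcal{A}$ and file combinations $\{\kappa_a\}_{a\in\mathcal{A}}$, the fate of every critical device $u\in\mathcal{L}$, and then aggregate these fates into the probability that $\max_{u}\mathcal{T}_u(t)$ exceeds $\max_{u}\mathcal{T}_u(t-1)$. By the very definition of $\mathcal{L}$, this maximum can be pushed up only through the decoding delay of a critical device, and by \thref{th1} one unit of decoding delay at device $u$ inflates $\mathcal{T}_u$ by exactly $1/(1-\mathds{E}[\epsilon_u])$, which is precisely the increment used to define $\mathcal{L}$. Hence the whole argument reduces to controlling the joint event that the devices in $\mathcal{L}$ incur decoding delay.

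First I would classify, using the reception options of \sref{sec:sys} and the decoding delay definition, the ways a critical device $u$ accrues delay in one slot: (i) $u$ transmits, $u\in\mathcal{A}$; (ii) $u$ hears more than one transmission, $u\in\mathcal{I}$; (iii) $u$ hears nothing because it is out of range, $u\in\mathcal{J}$; (iv) $u$ lies in the coverage of exactly one transmitter $a$ whose combination is not instantly decodable for it, i.e.\ $u\notin\tau_a(\kappa_a)$; or (v) $u$ is targeted by its unique transmitter $a$ but that transmission is erased, which occurs with probability $\epsilon_{au}$. Events (i)--(iv) are deterministic given $\mathcal{A}$ and $\{\kappa_a\}$, so the maximum is certain to rise unless (i)--(iii) are excluded for every $u\in\mathcal{L}$; this exclusion is exactly $\mathcal{L}\cap(\mathcal{A}\cup\mathcal{I}\cup\mathcal{J})=\varnothing$, i.e.\ $\mathcal{A}\in\mathcal{R}(\mathcal{L})$ as in \eref{eq:8}. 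I would also note here that being in $\mathcal{R}(\mathcal{L})$ forces each $u\in\mathcal{L}$ into the coverage zone of one and only one transmitter, so $\mathcal{L}$ is partitioned across the transmitters of $\mathcal{A}$.

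Next, fixing $\mathcal{A}\in\mathcal{R}(\mathcal{L})$, I would express the probability of increase as a function of the coding choices: an untargeted critical device (case (iv)) stalls with probability one, a targeted one stalls only under erasure (case (v), probability $\epsilon_{au}$), and since the erasure events over distinct sender--receiver links are independent, the probability that the whole critical set stalls factorizes as $\prod_{a\in\mathcal{A}}\prod_{u\in\mathcal{L}\cap\tau_a(\kappa_a)}\epsilon_{au}$, each untargeted critical device merely contributing the factor $1$ and thus being strictly harmful. Minimizing this quantity is equivalent, after taking $\log$ and reversing the sign, to maximizing $\sum_{a\in\mathcal{A}}\sum_{u\in\mathcal{L}\cap\tau_a(\kappa_a)}\log(1/\epsilon_{au})=\sum_{a\in\mathcal{A}}y(\kappa_a)$. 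Because the inner sums range over the disjoint coverage-zone pieces of $\mathcal{L}$ and $\kappa_a$ affects only its own term through $\tau_a(\kappa_a)$, the joint problem decouples into the outer selection $\mathcal{A}\in\mathcal{R}(\mathcal{L})$ and, per transmitter, the independent inner selection $\kappa_a^\ast(\mathcal{A})=\arg\max_{\kappa_a\in\mathcal{P}(\mathcal{H}_a)}y(\kappa_a)$, which is exactly \eref{eq:4}.

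The step I expect to be the main obstacle is the probability characterization in the third paragraph: pinning down precisely which joint pattern of critical-device stalls actually moves the network maximum (as opposed to raising individual $\mathcal{T}_u$'s that still stay below it) and justifying that the relevant event reduces to the clean product $\prod_{u}\epsilon_{au}$ under the erasure independence and the coverage-zone partition. Care is also needed because $\mathcal{I}$ and $\mathcal{J}$ are induced by $\mathcal{A}$ rather than chosen freely, so one must verify that $\mathcal{R}(\mathcal{L})$ captures exactly the schedules free of deterministic delay on $\mathcal{L}$; once that is settled, the log-linearization and the separation into outer/inner optimizations are routine.
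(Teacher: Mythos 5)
Your outline follows the paper's skeleton (classify the fate of each critical device, show that deterministic delay on $\mathcal{L}$ is excluded exactly by $\mathcal{A}\in\mathcal{R}(\mathcal{L})$, factorize a probability over independent links, take logarithms, and decouple the outer device selection from the per-transmitter coding problem, cf.\ \lref{l3} and \appref{app3}), but the probabilistic core --- the very step you flag as the main obstacle --- is carried out incorrectly, and your final agreement with \eref{eq:4} is the result of two errors cancelling. First, your per-device delay model is inverted relative to the paper's convention: by the decoding-delay definition (``channel erasures are not considered''), the example in \sref{sec:sys}, and the distribution imported from \cite{5456269} in \eref{eq6}, a \emph{targeted} critical device never accrues decoding delay (an erasure is charged to $\mathcal{E}_u$, not $\mathcal{D}_u$), so $\mathds{P}[d_u=0]=1$, whereas an \emph{untargeted} in-coverage device accrues delay exactly when it actually receives the useless combination, so $\mathds{P}[d_u=0]=\epsilon_{au}$. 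You assign stall probability $\epsilon_{au}$ to targeted devices and stall probability one to untargeted ones, which is the opposite. Second, the event you optimize is wrong: the maximum anticipated completion time increases as soon as \emph{some} critical device incurs delay, so the quantity to handle is $\mathds{P}[\text{no }u\in\mathcal{L}\text{ incurs delay}]=\prod_{u\in\mathcal{L}}\mathds{P}[d_u=0]$ (to be maximized), not ``the probability that the whole critical set stalls'' (to be minimized), which under independence is $\prod_{u\in\mathcal{L}}\mathds{P}[d_u>0]$ and is a different functional altogether.

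These two mistakes are not harmless: if you keep your delay model and use the correct objective, every $\kappa_a$ leaving some critical device untargeted yields increase probability one and the targeted devices contribute factors $1-\epsilon_{au}$, so the weight would come out as $\log(1-\epsilon_{au})$ plus a hard covering constraint --- not $y(\kappa_a)$; if you keep your ``all-stall'' objective and use the paper's delay model, you again do not recover \eref{eq:4}. The correct route (the paper's) maximizes $\prod_{a\in\mathcal{A}}\prod_{u\in\mathcal{L}\cap(\mathcal{O}_a(\mathcal{A})\setminus\tau_a(\kappa_a))}\epsilon_{au}$ over $\mathcal{A}\in\mathcal{R}(\mathcal{L})$, i.e.\ a product over the \emph{untargeted} critical devices, and then pivots to the targeted set: since for a fixed $\mathcal{A}$ the sum $\sum_{u\in\mathcal{L}\cap\mathcal{O}_a(\mathcal{A})}\log\epsilon_{au}$ is a constant, maximizing $\sum_{u\in\mathcal{L}\cap(\mathcal{O}_a\setminus\tau_a)}\log\epsilon_{au}$ is equivalent to maximizing $\sum_{u\in\mathcal{L}\cap\tau_a}\log(1/\epsilon_{au})=y(\kappa_a)$. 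This complement step is what legitimately produces the objective in \eref{eq:4}; your argument skips it precisely because of the swapped semantics. To repair the proof, restate the per-device distribution as in \eref{eq6}, derive $\mathds{P}[\text{no increase}]=\prod_{u\in\mathcal{L}}\mathds{P}[d_u=0]$ as in \lref{l3}, and then perform the log-complement conversion before decoupling; the rest of your decoupling and the identification of $\mathcal{R}(\mathcal{L})$ are fine.
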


\begin{proof}
The theorem is shown by expressing the probability of an increase in the anticipated completion time. The joint optimization over the set of transmitting devices and the file combinations is formulated. Using the definition of the critical set and the network topology, the problem can be reformulated as a constrained optimization wherein the objective function represents the set of transmitting devices and the constraint the file combinations. Finally, using the expression of the decoding delay provided in \cite{5456269}, the optimization problem is explicitly formulated. The complete proof can be found in \appref{app3}.
\end{proof}

\section{Collision-free Solution} \label{sec:int}

This section proposes solving the optimization problem in \thref{th2} in the particular scenario of cooperation without collision. It suggests choosing the limited set of transmitting devices and file combination in such a way that minimize the likelihood of an increase of the anticipated completion time. In particular, the section shows that the global solution of \eref{eq:4} can be efficiently reached when imposing restrictions on the set of transmitting devices through a graphical formulation. The relaxed completion time problem is shown to be equivalent to a maximum weight clique problem in the cooperative graph wherein the weight of each vertex is obtained by solving a multi-layer maximum weight clique in the local IDNC graph. 

\subsection{Problem Relaxation}

Due to the high interdependence between the optimization variables in \eref{eq:4}, both problems cannot be solved separately. This is mainly due to collision at certain devices upon which depends the optimal file combination. Collision occurs in a scenario wherein transmitting devices has a non-empty intersection of coverage zone. Hence, this section focuses on cooperation without collision, i.e., $\mathcal{I} = \varnothing$. The set $\mathcal{N}$ of such cooperation can be expressed as:
\begin{align}
\mathcal{N} = \{ \mathcal{A} \in \mathcal{P}(\mathcal{U}) \ | \ \mathcal{C}_u \cap \mathcal{C}_{u^\prime} \cap \tilde{\mathcal{U}} = \varnothing, \forall \ (u,u^\prime) \in  \mathcal{A}\}.
\label{eq:6}
\end{align}

Even tough the problem is challenging to solve, by imposing extra limitation on the possible cooperation, the optimization problem becomes more tractable. Indeed, under the collision-free cooperation constraint \eref{eq:6}, the optimization variables can be decoupled as shown in the following preposition.
\begin{preposition}
Under the cooperation limitation \eref{eq:6}, the optimal set of transmitting devices and their optimal file combination can be expressed as follows:
\begin{subequations}
\label{eq:7}
\begin{align}
\mathcal{A}^* &= \arg \max_{ \mathcal{A} \in \mathcal{N} \cap \mathcal{R}(\mathcal{L}) } \sum\limits_{a \in \mathcal{A}} y(\kappa_{a}^*) \label{eq:7a} \\
\text{with } y(\kappa_{a}^*) &=  \max_{\kappa_{a} \in \mathcal{P}(\mathcal{H}_{a})} \sum\limits_{u \in \mathcal{L} \cap \tau_{a}(\kappa_{a})} \log \cfrac{1}{\epsilon_{au}} \label{eq:7b}
\end{align} 
\end{subequations}
\end{preposition}

As shown in \eref{eq:7a} and \eref{eq:7b}, the optimization problem are decoupled which allows solving efficiently each separately. Equation \eref{eq:7b} translates the contribution of device $a$ to the network and equation \eref{eq:7a} optimizes the sum of the contribution under cooperation restrictions.

\subsection{Proposed Solution}

As shown in the previous subsection, the optimization problem \eref{eq:7b} reflects the contribution of the transmitting $u$-th device to the network. The problem can be efficiently solved using the multi-layer local IDNC graph formulation suggested in \cite{13051412}\footnote{While the connectivity conditions of the local graph formulation are identical to the one proposed in \cite{13051412}, the number of vertices, their weight, and the layer separation are different as shown in \appref{app4}.}. The multi-layer graph is generated by associating each device $u^\prime$ in the transmission range of the $u$-th device and a missing file $f$ to a vertex $v_{u^\prime f}$. Vertices are connected by an edge if the resulting file combination is instantly decodable for both devices represented by the vertices.

The set of transmitting devices, i.e., problem \eref{eq:7a}, is chosen by using a modified version of the cooperation graph introduced in \cite{5456269}. The cooperation graph $\mathcal{G}(\mathcal{V},\mathcal{E})$ is build by generating a vertex of each non-critical device in the network, i.e., $\mathcal{V} = \mathcal{N} \cap \mathcal{R}(\mathcal{L})$. Two device are connected if their satisfy the cooperation restriction in \eref{eq:6} which implies the condition \eref{eq:8}. In other words, vertices $v_u$ and $v_{u^\prime}$ are connected by an edge in $\mathcal{E}$ if the following condition holds:
\begin{align}
\mathcal{C}_{u} \cap \mathcal{C}_{u^\prime} \cap \tilde{\mathcal{U}} = \varnothing.
\end{align}

The following theorem reformulates the completion time reduction problem in collision-free scenarios as a graph theory problem over the cooperation graph.
\begin{theorem}
The optimal solution to the joint completion time optimization problem \eref{eq:7} under the collision-free restriction is equivalent to the maximum weight clique problem in the cooperation graph wherein the weight of each vertex $v_u$ is given by:
\begin{align}
w(v_u) = \sum\limits_{u^\prime \in \mathcal{L} \cap \tau_{u}(\kappa_{u})} \log \cfrac{1}{\epsilon_{uu^\prime}},
\end{align}
and $\kappa_{u}$ is obtained by solving the maximum weight clique problem in the multi-layer local graph of the $u$-device wherein the weight of each vertex $v_{u^\prime f}$ is:
\begin{align}
w(v_{u^\prime f}) = - \log (\epsilon_{u u^\prime}).
\end{align}
\label{th3}
\end{theorem}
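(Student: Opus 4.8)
The plan is to prove \thref{th3} as two essentially independent graph-theoretic reductions, one for each level of the decoupled program \eref{eq:7}: the inner file-combination problem \eref{eq:7b} is mapped to a maximum weight clique in the multi-layer local IDNC graph, and the outer transmitter-selection problem \eref{eq:7a} is mapped to a maximum weight clique in the cooperation graph whose vertex weights are precisely the optima of the inner problems. I would start from the decoupling already established in \eref{eq:7}, so that the only remaining task is to recognize each of the two maximizations as a max weight clique.

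For the inner problem \eref{eq:7b} with a fixed candidate transmitter $a$, the first step is to build its multi-layer local IDNC graph following \cite{13051412} but with the vertex set, weights, and layering modified as in \appref{app4}: a vertex $v_{u^\prime f}$ is generated for each critical device $u^\prime \in \mathcal{C}_a \cap \mathcal{L}$ in range and each file $f \in \mathcal{W}_{u^\prime} \cap \mathcal{H}_a$, and two vertices are joined exactly when the induced XOR combination is instantly decodable for both represented devices, i.e. $f=f^\prime$, or $f \notin \mathcal{W}_{u^{\prime\prime}}$ and $f^\prime \notin \mathcal{W}_{u^\prime}$. I would then prove the standard two-directional correspondence: every clique yields a combination $\kappa_a \in \mathcal{P}(\mathcal{H}_a)$ that is instantly decodable for exactly the set of devices appearing in the clique, and conversely the set of (device, decoded-file) pairs induced by any instantly decodable $\kappa_a$ forms a clique; the layer separation is what guarantees this map is well defined and that each targeted critical device is represented at most once, matching its single appearance in the sum of \eref{eq:7b}. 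Assigning $w(v_{u^\prime f}) = -\log(\epsilon_{a u^\prime})$ makes the weight of a clique equal to $\sum_{u^\prime \in \mathcal{L}\cap\tau_a(\kappa_a)}\log(1/\epsilon_{a u^\prime})$, so $y(\kappa_a^*)$ is exactly the max weight clique value; since $-\log \epsilon_{a u^\prime} \ge 0$, the optimum is attained at a maximal clique, i.e. a combination that cannot be extended.

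For the outer problem \eref{eq:7a}, I would first observe that the collision-free constraint \eref{eq:6} is pairwise: $\mathcal{A} \in \mathcal{N}$ iff every pair $(u,u^\prime) \in \mathcal{A}$ satisfies $\mathcal{C}_u\cap\mathcal{C}_{u^\prime}\cap\tilde{\mathcal{U}} = \varnothing$, which is exactly the edge condition of the cooperation graph $\mathcal{G}(\mathcal{V},\mathcal{E})$; intersecting with $\mathcal{R}(\mathcal{L})$ restricts attention to the admissible vertex set $\mathcal{V} = \mathcal{N}\cap\mathcal{R}(\mathcal{L})$. The crucial consequence of \eref{eq:6} is that along any clique of $\mathcal{G}$ the coverage zones of the selected transmitters are pairwise disjoint on $\tilde{\mathcal{U}} \supseteq \mathcal{L}$, so no critical device is targeted by two transmitters and the objective $\sum_{a\in\mathcal{A}} y(\kappa_a^*)$ is genuinely additive over the clique's vertices. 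Setting $w(v_u) = y(\kappa_u^*) = \sum_{u^\prime\in\mathcal{L}\cap\tau_u(\kappa_u)}\log(1/\epsilon_{u u^\prime})$ then turns \eref{eq:7a} into a maximum weight clique problem on $\mathcal{G}$; combining this with the previous paragraph yields the nested formulation of the theorem.

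I expect the main obstacle to be the inner correspondence: making the clique-to-combination map well defined and bijective in both directions while the modified multi-layer structure is in play, and checking that this layering (different from \cite{13051412}) still forces ``each targeted device decodes exactly one file'' and is consistent with the per-device weights $-\log\epsilon_{a u^\prime}$. A secondary point needing care is the outer feasibility: one must pin down exactly how $\mathcal{V}=\mathcal{N}\cap\mathcal{R}(\mathcal{L})$ encodes $\mathcal{R}(\mathcal{L})$ — in particular the requirement that no critical device is left in collision or out of range — so that cliques of $\mathcal{G}$ are in bijection with feasible cooperations, and one must invoke the disjoint-coverage property to rule out double counting, since additivity of the objective is precisely what makes the clique-weight identity valid.
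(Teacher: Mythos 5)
Your proposal follows essentially the same route as the paper's proof in \appref{app4}: start from the decoupling in \eref{eq:7}, recognize the inner problem \eref{eq:7b} as a maximum weight clique on the local multi-layer IDNC graph (restricted to devices in the transmission range, vertex weights $-\log \epsilon_{a u^\prime}$), and map the outer problem \eref{eq:7a} to a maximum weight clique on the cooperation graph via the clique--feasible-cooperation correspondence, with vertex weights equal to the inner optima $y(\kappa_u^*)$. The only cosmetic difference is that you generate local-graph vertices only for critical in-range devices, whereas the paper keeps all in-range devices and uses the layer-construction thresholds on the anticipated completion time to prioritize them; this does not alter the equivalence asserted in the theorem.
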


\begin{proof}
The theorem is shown reformulating both problems \eref{eq:7a} and \eref{eq:7b} as graph theory problems in the cooperation and local IDNC graph, respectively. The steps in showing that \eref{eq:7b} correspond to the maximum weight clique in the local IDNC graph are similar to the ones used in \cite{13051412}. Afterward, the proof establishes a one-to-one mapping between the set of possible cooperation and the set of cliques in the cooperation graph. Finally, given that the weight of each clique corresponds to the objective function in \eref{eq:7a}, the paper concludes that the optimal solution is the maximum weight clique. A complete proof can be found in \appref{app4}.
\end{proof}

\section{General Solution} \label{sec:gen}

This section proposes extending the completion time reduction solution by relaxing the interference-free constraint of \sref{sec:int}. The fundamental concept in finding the optimal solution to the joint optimization problem proposed in \thref{th2} is to extend the cooperation graph with clusters of devices such that the collaboration between these ``virtual devices" is interference-free. Afterward, using the proposed interference-free solution in the extended graph generates the optimal solution to the joint optimization problem. Finally, due to the high complexity of finding the optimal solution which results from the huge number of potential clusters, the section suggests a lower complexity algorithm that generates only a subset of the virtual devices.

\subsection{Extended Cooperation Graph}

The extended cooperation graph is introduced in \cite{9745154} to discover the optimal solution to the decoding delay reduction problem in partially connected D2D-enabled networks. The graph, while it allows representing all cooperation between devices, ensures that the optimal file combination each device (genuine or virtual) can make only depends on that device. Hence, the formulation allows the separation of the set of transmitting devices and the file combination as the interference-free constraint in \sref{sec:int}. The first part of this subsection describes the set of feasible clusters that satisfy the constraints. The second part construct the extended cooperation graph.

Let $\mathbf{Z}$ be the set of feasible clusters. In order to have a compact and feasible representation, elements in $\mathcal{Z} \in \mathbf{Z}$ need to verify the following constraints:
\begin{itemize}
\item The feasibility of the set of transmitting device is given by $\mathcal{Z}$ is included $\mathcal{N} \cap \mathcal{R}(\mathcal{L})$.
\item The minimum representation of clusters such that the optimal file combination of each cluster only depends on the clusters is given by $\mathcal{C}^T(Z) \cap \mathcal{C}^T(\mathcal{Z} \setminus Z) \neq \varnothing, \forall \ Z \subset \mathcal{Z}$
\end{itemize}
Therefore, the set $\mathbf{Z}$ is constructed as follows:
\begin{align}
\mathbf{Z} =& \{ \mathcal{Z} \in \mathcal{P}(\mathcal{N} \cap \mathcal{R}(\mathcal{L})) \ |  \nonumber \\
& \qquad \ \mathcal{C}^T(Z) \cap \mathcal{C}^T (\mathcal{Z} \setminus Z) \neq \varnothing,\ \forall \ Z \subset \mathcal{Z}\}.
\end{align}

Given the set of feasible clusters $\mathbf{Z}$, the construction of the extended cooperation graph follows similar steps than the construction of the cooperation graph. A vertex $v$ is generated for each cluster $ \mathcal{Z} \in \mathbf{Z}$. Vertices $v$ and $v^\prime$ are connected if their coverage zone are disjoint wherein the coverage zone of a cluster is defined as the union of the coverage zones of its devices.

\subsection{Completion Time Reduction}

Let the extended multi-layer $\mathcal{G}_u(\mathcal{Z})$ be generated as the multi-layer IDNC graph at the exception that the transmitting devices and the one in interference in $\mathcal{Z}$ are omitted during the vertices generation phase. The connectivity conditions are the same as the local IDNC graph. In other words, the extended graph represents the possible file combinations and targeted devices for a set of interfering transmitting devices.

The following theorem reformulates the joint optimization problem in \thref{th2} as a maximum weight clique search over the extended cooperation graph.
\begin{theorem}
The optimal solution to the joint optimization over the set of transmitting devices and the file combination \eref{eq:4} is equivalent to a maximum weight clique in the extended cooperative graph wherein the weight vertex $v$ corresponding to cluster $\mathcal{Z}$ is given by:
\begin{align}
w(v) = \sum_{u \in \mathcal{Z}} \sum_{u^\prime \in \tau_u(\kappa_u(\mathcal{Z}))} \cfrac{1}{\log(\epsilon_{uu^\prime})}
\end{align}
and $\kappa_u(\mathcal{Z})$ is obtained by solving the maximum weight clique problem in the extended multi-layer IDNC graph $\mathcal{G}_u(\mathcal{Z})$ wherein the weight of vertex $v_{u^\prime f}$ is:
\begin{align}
w(v_{u^\prime f}) = - \log (\epsilon_{u u^\prime}).
\end{align}
\label{th4}
\end{theorem}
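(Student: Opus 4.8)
The plan is to follow the blueprint of \thref{th3}, replacing the cooperation graph by the extended cooperation graph of \cite{9745154} and the local IDNC graph by the extended multi-layer IDNC graph $\mathcal{G}_u(\mathcal{Z})$, and then to argue that the intra-cluster interference has already been absorbed into the vertex construction. First I would establish the general-case analogue of the preposition of \sref{sec:int}: under the two constraints defining $\mathbf{Z}$, the joint problem \eref{eq:4} decouples into the selection of a family of clusters on the one hand and, on the other, the independent choice of a file combination $\kappa_u(\mathcal{Z})$ for every transmitter $u$ inside each cluster $\mathcal{Z}$, with that choice depending on $\mathcal{Z}$ alone. The key point is that the minimality condition $\mathcal{C}^T(Z)\cap\mathcal{C}^T(\mathcal{Z}\setminus Z)\neq\varnothing$ for every proper subset $Z\subset\mathcal{Z}$ forbids splitting a cluster into two cooperations with disjoint coverage zones, so that the set of devices actually in collision with, or targeted by, the simultaneous transmissions of $\mathcal{Z}$ is a function of $\mathcal{Z}$ only; together with the requirement that distinct transmitting clusters have disjoint coverage zones, this removes every inter-cluster coupling from the inner maximisation.

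The second step reformulates the inner problem for a fixed cluster. For $\mathcal{Z}$ fixed, jointly choosing the combinations $\{\kappa_u(\mathcal{Z})\}_{u\in\mathcal{Z}}$ so as to maximise $\sum_{u\in\mathcal{Z}}\sum_{u'\in\mathcal{L}\cap\tau_u(\kappa_u(\mathcal{Z}))}\log(1/\epsilon_{uu'})$ is, by the argument already used in \thref{th3} and originating in \cite{13051412}, equivalent to a maximum weight clique in $\mathcal{G}_u(\mathcal{Z})$: each vertex $v_{u'f}$ records that a non-transmitting, non-interfered device $u'$ would be served file $f$ by the transmitter owning that layer, the connectivity rule guarantees that a clique is exactly a feasible assignment of instantly decodable combinations together with a collision-free pattern of served devices across the layers of $\mathcal{Z}$, and the vertex weight $-\log\epsilon_{uu'}$ makes the weight of a clique equal to the inner objective. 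Hence a maximum weight clique in $\mathcal{G}_u(\mathcal{Z})$ delivers $\kappa_u(\mathcal{Z})$ and, after summation over $u\in\mathcal{Z}$, the vertex weight $w(v)$ claimed in the statement.

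Third I would establish a one-to-one correspondence between feasible cooperations and cliques of the extended cooperation graph. Any feasible $\mathcal{A}$ decomposes uniquely into its maximal interference-connected components; by the constraints defining $\mathbf{Z}$ each component is an element of $\mathbf{Z}$, and two distinct components have disjoint coverage zones and are therefore adjacent in the extended cooperation graph, so the components form a clique. Conversely a clique is a set of clusters with pairwise disjoint coverage zones whose union is a feasible cooperation. Because the coverage zones of distinct clusters are disjoint, the served sets $\tau_u(\kappa_u(\mathcal{Z}))$ attached to different clusters never overlap, so the weight $\sum_{\mathcal{Z}}w(v)$ of a clique equals $\sum_{a\in\mathcal{A}}y(\kappa_a^*(\mathcal{A}))$, the objective of \eref{eq:4}. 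Combining the three steps, the optimum of \eref{eq:4} is attained at a maximum weight clique of the extended cooperation graph with the stated weights, the weights themselves being computed by the companion clique problems in the extended multi-layer graphs.

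The hard part will be the first step, namely proving that the two constraints defining $\mathbf{Z}$ are simultaneously \emph{sufficient} to decouple the intra-cluster file-combination optimisation from the rest of the network and \emph{necessary} to prevent a cooperation from being represented by several distinct cliques; getting this equivalence exactly right is what makes the bijection of the third step well posed. The remaining graph-theoretic translations are routine extensions of \thref{th3} and of the arguments in \cite{13051412,5456269,9745154}, and the full details are deferred to the technical report \cite{5684518}.
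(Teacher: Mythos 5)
Your proposal follows essentially the same route as the paper's own proof: a bijection between feasible sets of transmitting devices and cliques of the extended cooperation graph via the unique decomposition into coverage-overlap-connected clusters (the paper's omitted preposition, mirroring Lemma~2 of \cite{9745154}), decoupling of the per-cluster file combination thanks to disjoint cluster coverage zones, and a per-cluster maximum weight clique in the extended multi-layer IDNC graph with weights $-\log(\epsilon_{uu^\prime})$, concluded as in \thref{th3}. The only difference is cosmetic: you write the inner objective as $\log(1/\epsilon_{uu^\prime})$, which is consistent with \thref{th3}, whereas the statement and appendix write $1/\log(\epsilon_{uu^\prime})$ (an apparent typo), so no substantive gap.
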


\begin{proof}
The theorem is established by showing a one-to-one mapping between all feasible set of transmitting devices and the set of clusters. Afterward, the local IDNC graph is extended to find the optimal file combination for a given cluster. Finally, the joint optimization problem is reformulated in terms of the non-interfering clusters and solved using the results of \sref{sec:int}. The complete proof can be found in \appref{app6}.
\end{proof}

In order to reduce the number of clusters, i.e., reduce $|\mathbf{Z}|$, reference \cite{9745154} proposes generating a subset of clusters. For their decoding delay setup, the authors propose sequentially constructing the extended cooperation graph by eliminating clusters that are surely not part of the maximum weight clique. The proposed method naturally extends to the completion time minimization by considering the adequate weight of vertices as defined in \thref{th4}.

\section{Simulation Results} \label{sec:sim}

\begin{figure}[t]
\centering
\includegraphics[width=0.8\linewidth]{./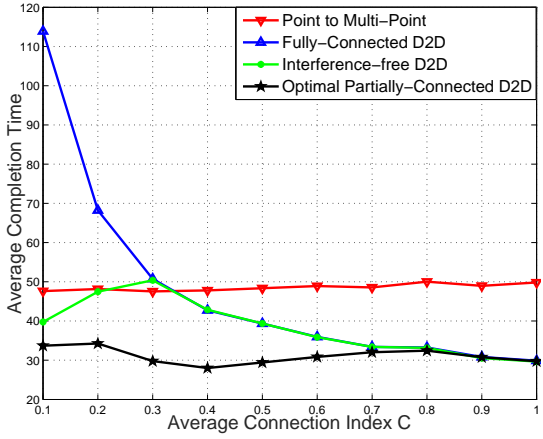}\\
\caption{Mean completion time versus the connectivity index $C$ for a network composed of $U=60$ devices, $F=30$ files, and an erasure probability $\mathbf{E}=0.1$.}\label{fig:C}
\end{figure}

\begin{figure}[t]
\centering
\includegraphics[width=0.8\linewidth]{./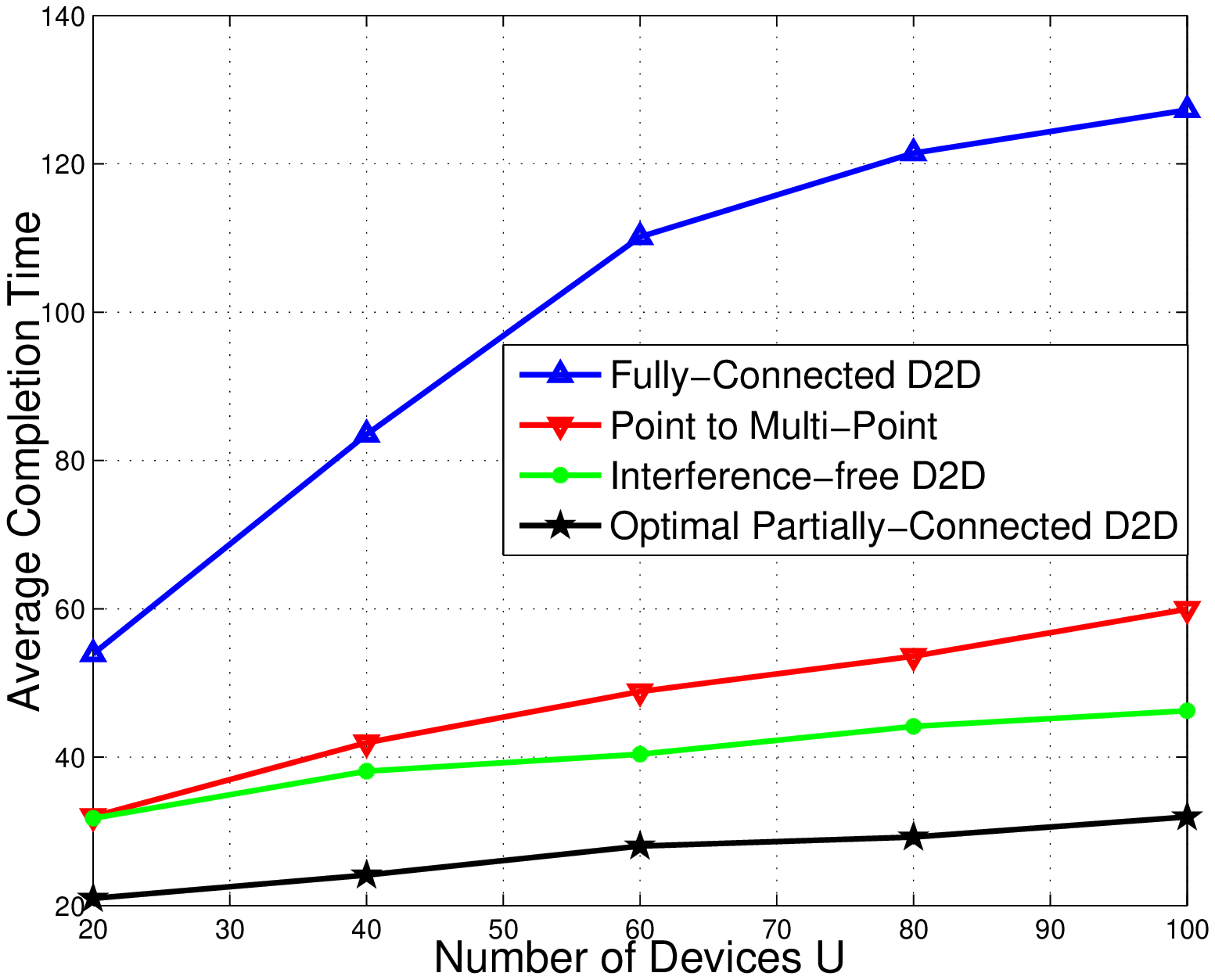}\\
\caption{Mean completion time versus number of devices $U$ for a network composed of $F=30$ files, a connecitvity index $C=0.1$, and an erasure probability $\mathbf{E}=0.1$.}\label{fig:M1}
\end{figure}

\begin{figure}[t]
\centering
\includegraphics[width=0.8\linewidth]{./figs/M1.eps}\\
\caption{Mean completion time versus number of devices $U$ for a network composed of $F=30$ files, a connecitvity index $C=0.4$, and an erasure probability $\mathbf{E}=0.1$.}\label{fig:M4}
\end{figure}

\begin{figure}[t]
\centering
\includegraphics[width=0.8\linewidth]{./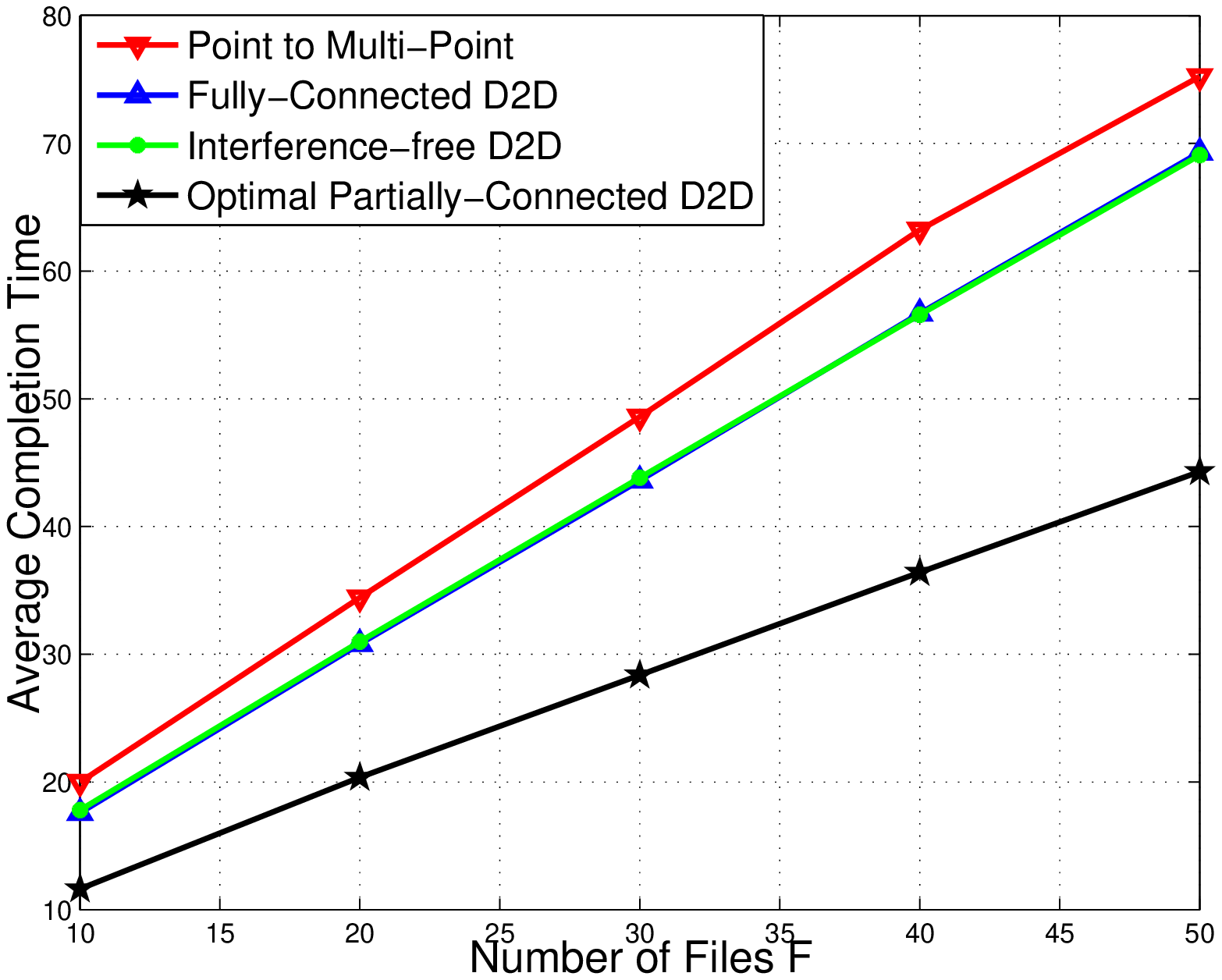}\\
\caption{Mean completion time versus number of files $F$ for a network composed of $U=60$ devices, a connecitvity index $C=0.1$, and an erasure probability $\mathbf{E}=0.1$.}\label{fig:N1}
\end{figure}

\begin{figure}[t]
\centering
\includegraphics[width=0.8\linewidth]{./figs/N4.eps}\\
\caption{Mean completion time versus number of files $F$ for a network composed of $U=60$ devices, a connecitvity index $C=0.4$, and an erasure probability $\mathbf{E}=0.1$.}\label{fig:N4}
\end{figure}

\begin{figure}[t]
\centering
\includegraphics[width=0.8\linewidth]{./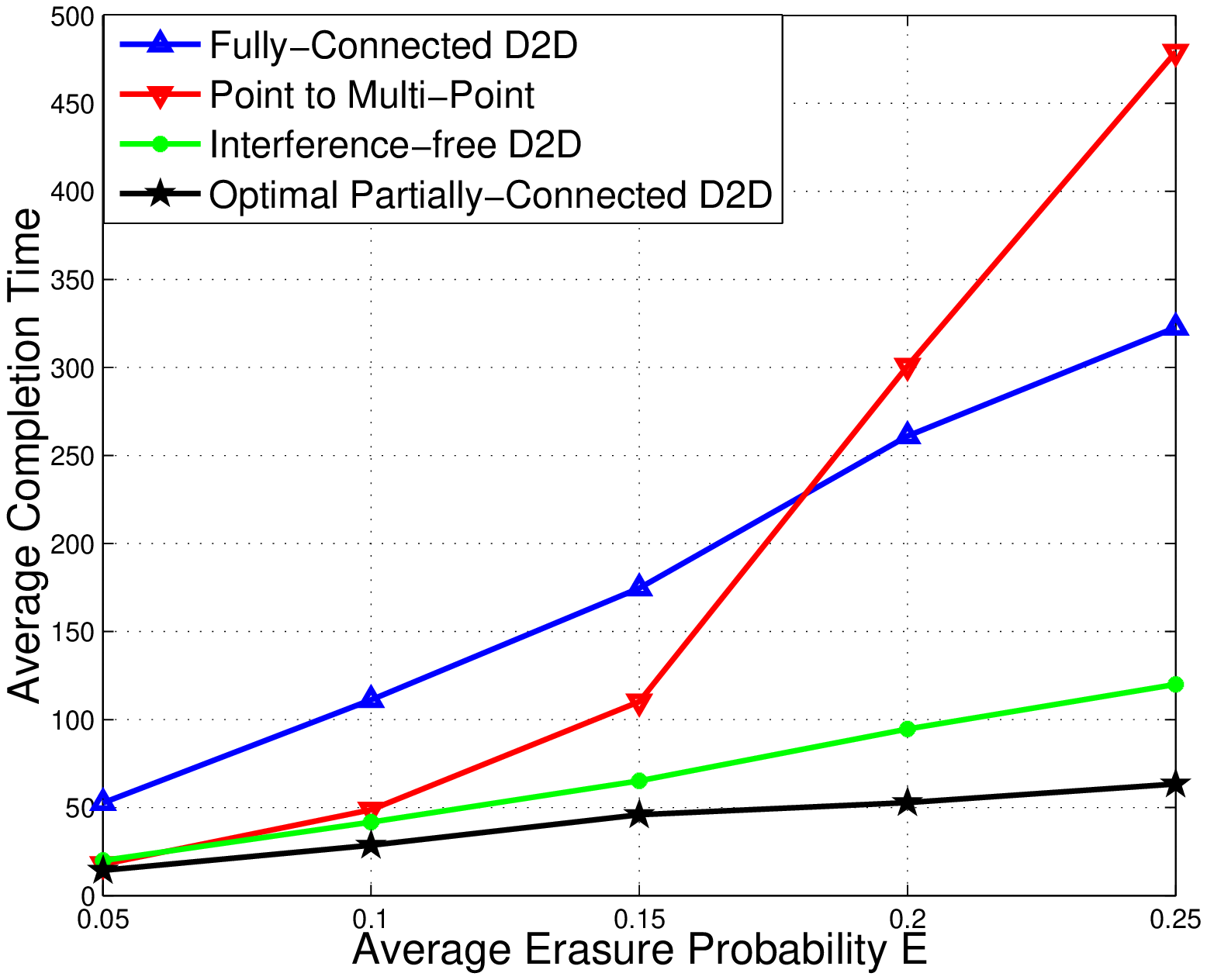}\\
\caption{Mean completion time versus average erasure probability $\mathbf{E}$ for a network composed of $U=60$ devices, $F=30$ files, and a connecitvity index $C=0.1$.}\label{fig:P1}
\end{figure}

\begin{figure}[t]
\centering
\includegraphics[width=0.8\linewidth]{./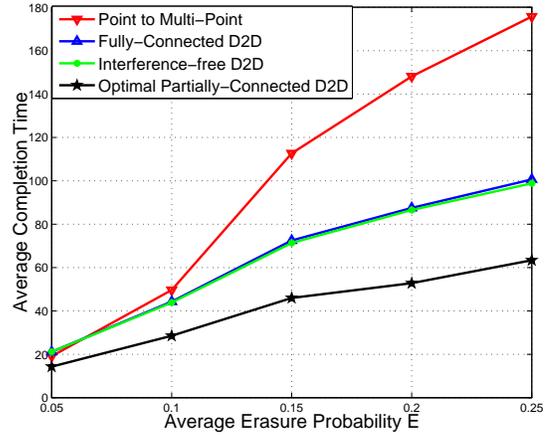}\\
\caption{Mean completion time versus average erasure probability $\mathbf{E}$ for a network composed of $U=60$ devices, $F=30$ files, and a connecitvity index $C=0.4$.}\label{fig:P4}
\end{figure}

This section attests the performance of the proposed interference-free and optimal solution to efficiently reduce the completion time in a partially connected D2D-enabled network. The proposed algorithms are compared against the following schemes:
\begin{itemize}
\item The PMP system in which a wireless base-station is responsible for the transmissions. The base-station can target all devices and hold all files. The average erasure probability from the base-station to the devices is denoted by $P$.
\item The fully-connected D2D system in which a single device transmits at each round. 
\end{itemize}

The completion time is computed over a large number of iterations and the average value presented on the plots. The initial distribution of the Has and Wants set of devices is drawn according to their average erasure probability. The connectivity index $C$ is defined as the number of edges normalized by the total number of edges, i.e., $C = |\mathcal{E}|/U^2$. The number of devices, files, erasure probability, and the connectivity index are variables in the simulations so as to study multiple scenarios. Given that the device-to-device channel is more reliable than the base-station-to-device one \cite{6620795,6404743}, then $P$ is fixed to $P=2 \mathbf{E}$ in all simulations.

\fref{fig:C} shows the average number of transmission against the connectivity index $C$ for a network composed of $U=60$ device, $F=30$ files and device-to-device erasure $\mathbf{E}=0.1$. The proposed interference-free solution provides a significant performance improvement as compared with the fully-connected algorithm for a poorly connected devices. This can be explained by the fact that for a poorly connected network, a probability of devices transmitting simultaneously while preserving the interference-free constraint is high. However, as the connectivity increases, both the interference-free and the fully connected solution provide the same performance as a single device is allowed to transmit. The optimal partially connected solution outperforms all other solutions for all connectivity index. However, for highly connected networks, all D2D solution have similar performance as collaboration between devices boils down to a single transmitting device.

\fref{fig:M1} and \fref{fig:M4} plot the completion time against the number of devices $U$ for a network composed of $F=30$ files, and an average device-to-device erasure $\mathbf{E}=0.1$ for a poorly connected (connectivity index $C=0.1$) and moderately connected (connectivity index $C=0.4$) networks, respectively. \fref{fig:N1} and \fref{fig:N4} illustrate the completion time against the number of files $F$ for a network composed of $U=60$ devices and an average device-to-device erasure $\mathbf{E}=0.1$  for a poorly connected (connectivity index $C=0.1$) and moderately connected (connectivity index $C=0.4$) networks, respectively.

For a low connectivity index in \fref{fig:M1} and \fref{fig:N1}, the interference-free solution provides performance gain over the point-to-multipoint solution even though the base-station can encode all combinations. This can be explained by the fact that the proposed solution allows multiple devices to transmit simultaneously. However, in the PMP solution and fully connected schemes only a single transmitting devices, and thus a single combination, can be communicated at each time slot. The optimal and interference-free solution have the same performance. This can be explained by the fact that the set of feasible combination of transmitting devices is large and thus allowing interference in poorly connected network does not provide significant performance improvement.

For moderately connected networks in \fref{fig:M4} and \fref{fig:N4}, the optimal solution outperforms both the fully connected and the interference-free solution. This is mainly due to the fact that the difference in size of the set of feasible transmitting devices in the interference-free and the optimal solution is no longer negligible. However for a small number of devices or files, the degradation of the interference-free solution against the optimal one is negligible as compared with the complexity gain.
 
\fref{fig:P1} and \fref{fig:P4} show the performance of the schemes to reduce the completion time versus the average device-to-device erasure $\mathbf{E}$ for a network composed of $U=60$ devices $F=30$ files for a poorly connected (connectivity index $C=0.1$) and moderately connected (connectivity index $C=0.4$) networks, respectively. The proposed optimal solution outperforms the conventional PMP algorithm for all values of the erasure probability. This can be explained by the fact that the optimal solution allows multiple transmissions and thus it allows satisfying multiple devices with non-combinable file demand. 

\section{Conclusion} \label{sec:con}

This paper considers the completion time reduction problem in a partially connected device-to-device network using instantly decodable network coding. The joint problem over the set of transmitting devices and the file combinations is formulated and solved. The proposed solution relies on finding the file combination using the local IDNC graph and using it to construct the cooperation graph and solve a maximum weight clique problem. Simulation results show that the proposed solution largely outperform conventional approaches. As a future research direction, the fully distributed system in which decision are made locally at each device could be considered. Another interesting direction is the multicast scenario in which the demand of each device differs from the others. In that case, a wanted file by one device may not unwanted by all its neighbors.

\appendices
\numberwithin{equation}{section}

\section{Proof of \lref{l1}} \label{app1}

Let $\mathcal{S}$ be a schedule of transmissions. The individual completion time $\mathcal{T}_u(\mathcal{S})$ of the $u$-th device occurs when the device receives an instantly decodable file combination that makes its Wants set empty. Such event occurs in the $\mathcal{T}_u(\mathcal{S})$-th recovery transmission. For time slots $t$ before the $\mathcal{T}_u(\mathcal{S})$-th transmission, the following events can occur at the $u$-th device:
\begin{itemize}
\item No file can be heard by the device. Such event happens if one of the following scenarios happens:
\begin{itemize}
\item The $u$-th device is one of the transmitting devices.
\item The $u$-th device experiences interference from the transmitting devices.
\item The $u$-th device is out of the transmission range of the transmitting devices.
\end{itemize}
In all above cases, the $u$-th device is not able to hear exactly a single transmission. Therefore, its cumulative decoding delay $\mathcal{D}_u(\mathcal{S})$ increases by one unit.
\item A single file combination can be heard by the device. Two events can occur:
\begin{itemize}
\item The file combination is erased at the $u$-th device. Therefore, the number of erased files $\mathcal{E}_u(\mathcal{S})$ increases by a single unit.
\item The file combination is successfully received by the device. Depending on the instant decodability of the combination, two events can occur:
\begin{itemize}
\item The combination is instantly decodable. The device needs to receive $|\mathcal{W}_u|-1$ of such combinations before the $\mathcal{T}_u(\mathcal{S})$-th time slot.
\item The combination is not instantly decodable. From its definition, the cumulative decoding delay $\mathcal{D}_u(\mathcal{S})$ of the $u$-th devices increases by one unit.
\end{itemize}
\end{itemize}
\end{itemize}

In the $\mathcal{T}_u(\mathcal{S})$-th transmission, the $u$-th device completes the reception of all its Wanted files. Therefore, number of recovery transmission $\mathcal{T}_u(\mathcal{S})$ of the $u$-th device can be expressed using the following formula: 
\begin{align}\label{eq2}
\mathcal{T}_u(\mathcal{S}) = |\mathcal{W}_u| + \mathcal{D}_u(\mathcal{S}) + \mathcal{E}_u(\mathcal{T}_u(\mathcal{S})-1).
\end{align}
Finally, note that the $\mathcal{T}_u(\mathcal{S})$-th transmission results in a successful transmission of the $u$-th device as it receives its missing file. Therefore, the number of erased files is $\mathcal{E}_u(\mathcal{T}_u(\mathcal{S})-1) = \mathcal{E}_u(\mathcal{T}_u(\mathcal{S}))=\mathcal{E}_u(\mathcal{S})$ which allows to reformulate the expression \eref{eq2} as:
\begin{align}
\mathcal{T}_u(\mathcal{S}) = |\mathcal{W}_u| + \mathcal{D}_u(\mathcal{S}) + \mathcal{E}_u(\mathcal{S}).
\end{align}

\section{Proof of \thref{th1}}\label{app2}

This theorem is shown by approximating the number of erased files at the $u$-th device. The assumption that devices are equally likely to transmit follows from the fact that the Has and Wants set are randomly and uniformly distributed among devices. Further, for a moderately connected networks, the number of transmissions in which the $u$-th device is out of the transmission range of the transmitting devices $\mathcal{A}$ is negligible with respect to the total number of transmissions, i.e., $\mathcal{J} = \varnothing$.

Let $\mathcal{C}_u$ be the coverage zone of $u$-th device. The following Lemma characterizes the transmission probability of devices in the coverage zone $\mathcal{C}_u$:
\begin{lemma}
At each time instant $t$ one and exactly one device in $\mathcal{C}_u$ transmits. Moreover, devices in $\mathcal{C}_i$ are all equally likely to transmit.
\label{l2}
\end{lemma}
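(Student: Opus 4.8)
The plan is to prove the two assertions of \lref{l2} separately --- first that at each instant exactly one device of the coverage zone $\mathcal{C}_u$ transmits, and then that the identity of that device is uniform over $\mathcal{C}_u$ --- and then to combine them.

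\emph{At most one transmitter in $\mathcal{C}_u$.} I would argue by contradiction using the collision-free restriction \eref{eq:6}. Suppose two distinct devices $a,a' \in \mathcal{A}\cap\mathcal{C}_u$ transmit in the same slot. Since $c_{au}=c_{a'u}=1$, device $u$ belongs to $\mathcal{C}_a\cap\mathcal{C}_{a'}$; and because $u$ still has missing files (otherwise $\mathds{E}[\epsilon_u]$ plays no role), $u\in\tilde{\mathcal{U}}$, so $\mathcal{C}_a\cap\mathcal{C}_{a'}\cap\tilde{\mathcal{U}}\neq\varnothing$, contradicting \eref{eq:6}. Hence $|\mathcal{A}\cap\mathcal{C}_u|\le 1$. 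For the matching \emph{lower} bound I would use the moderately-connected assumption invoked at the start of this appendix, namely $\mathcal{J}=\varnothing$: no device with non-empty Wants set is out of range of every transmitting device. Since the schedule keeps some device transmitting until all Wants sets are empty, $\mathcal{A}\neq\varnothing$ at every slot, and $u\notin\mathcal{J}$ forces $\mathcal{A}\cap\mathcal{C}_u\neq\varnothing$. Combining the two bounds yields $|\mathcal{A}\cap\mathcal{C}_u|=1$, i.e. one and exactly one device of $\mathcal{C}_u$ transmits at each instant $t$.

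\emph{Uniformity over $\mathcal{C}_u$.} For the second claim I would appeal to the modeling assumption on the initial state: the Has sets, and hence the Wants sets, are drawn i.i.d. across devices from the same erasure-governed law, so the joint distribution of the network state is exchangeable under permutations of the devices of $\mathcal{C}_u$. Conditioning on the event that exactly one of these devices transmits, no device of $\mathcal{C}_u$ is distinguished a priori, so the selected transmitter is uniform on $\mathcal{C}_u$; this is exactly what licenses replacing $\mathds{E}[\epsilon_u]$ by $\overline{\epsilon}_u=\frac{1}{|\mathcal{C}_u|}\sum_{u'\in\mathcal{C}_u}\epsilon_{u'u}$ in \thref{th1}.

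I expect the genuine difficulty to lie in this last step: strictly, whether a given device of $\mathcal{C}_u$ \emph{can} be the useful transmitter in a particular slot depends on the current Has/Wants configuration of its own coverage zone, so the transmitter is not literally uniform slot by slot. The honest statement --- and the one the approximation of \thref{th1} rests on --- is that, averaged over the random initial configuration and over the execution of the algorithm, the symmetric random model introduces no systematic bias among the devices of $\mathcal{C}_u$; turning this into a clean proof amounts to making the exchangeability argument above precise and noting that the completion-time expression enters only through the averaged erasure $\overline{\epsilon}_u$.
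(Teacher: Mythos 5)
Your proposal is correct and follows essentially the same route as the paper's own proof: the at-most-one part via the collision/interference-free restriction together with $u\in\tilde{\mathcal{U}}$, the at-least-one part via the moderately-connected assumption $\mathcal{J}=\varnothing$ and the symmetry of the connectivity matrix, and the uniformity part by appealing to the uniformly random Has/Wants distribution making all devices equally likely to transmit. The paper's own treatment of the uniformity step is no more rigorous than yours, so your candid caveat about it being an averaged, symmetry-based approximation matches the spirit of the original argument.
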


\begin{proof}
The lemma is proven by first showing that at least a single device $\mathcal{C}_u$ transmits at each time instant. Secondly, it is shown that, under the interference-free assumption $\mathcal{I} = \varnothing$, no more than a single device is allowed to transmit. Finally, extending the equally likely transmission property of all devices to the coverage zone $\mathcal{C}_u$ concludes the proof. The complete proof can be found in \appref{app5}.
\end{proof}

Let $\mathcal{X}_u(t)$ be Bernoulli random variable that takes value $1$ if the file is erased at the $u$-th device in the $t$-th transmission and $0$ otherwise and let $\mathcal{L}(t)$ be a random variable taking the index of the transmitting device $u^\prime \in \mathcal{C}_i$. The probability for a file to be erased at the $u$-th device in the $t$-th transmission can be expressed as:
\begin{align}
\mathds{P}(\mathcal{X}_u(t)=1) = \sum_{u^\prime \in \mathcal{C}_u}\mathds{P}(\mathcal{X}_u(t)=1|\mathcal{L}(t)=u^\prime)\mathds{P}(\mathcal{L}(t)=u^\prime).
\end{align}

We have $\mathds{P}(\mathcal{X}_u(t)=1|\mathcal{L}(t)=u)= 0$ since the transmitting device is the $u$-th device itself and thus the file combination cannot be erased. From the system model, the erasure probability is given by
\begin{align}
\mathds{P}(\mathcal{X}_u(t)=1|\mathcal{L}(t)=u^\prime) = \epsilon_{u^\prime u}.
\end{align} 

From \lref{l1}, all devices in $\mathcal{C}_u$ are equally likely to transmit. Therefore:
\begin{align}
\mathds{P}(\mathcal{L}(t)=u^\prime) = \cfrac{1}{|\mathcal{C}_u|}, \forall \ u^\prime \in \mathcal{C}_u.
\end{align}
Combining the two previous equation, the probability that the file combination is erased at the $u$-th device can be expressed as :
\begin{align}
\mathds{P}(\mathcal{X}_u(t)=1) = \cfrac{|\mathcal{C}_u|-1}{|\mathcal{C}_u|} \sum_{u^\prime \neq u \in \mathcal{C}_u}\epsilon_{u^\prime u} = \cfrac{1}{|\mathcal{C}_u|} \overline{\epsilon}_{u}
\end{align}
where $\overline{\epsilon}_{u} = \cfrac{1}{|\mathcal{C}_u|-1} \sum_{u^\prime \neq u \in \mathcal{C}_u}\epsilon_{u^\prime u}$ is the average file erasure probability experienced by the $u$-device.

The cumulative number of erased files at the $t$-th device until time slot $n$ can be written as:
\begin{align}
\mathcal{E}_u(n) &= \sum_{t=1}^{n} \mathcal{X}_u(t)
\end{align}
Hence, the total number of erased files at the $t$-th device until its completion time $\mathcal{E}_u(\mathcal{T}_u(\mathcal{S}))$ is the sum of $\mathcal{C}_u(\mathcal{S})-1$ Bernoulli variable $\mathcal{X}_u(t)$. For large enough number of files $F$, the individual completion time $\mathcal{T}_u(\mathcal{S})$ of the $t$-th device would also be large enough. Therefore, we can the law of large numbers to approximate $\mathcal{E}_u(\mathcal{T}_u(\mathcal{S})-1)$ as follows:
\begin{align}\label{eq1}
\mathcal{E}_u(\mathcal{T}_u(\mathcal{S})-1) \approx \overline{\epsilon}_{u} \cfrac{|\mathcal{C}_u|-1}{|\mathcal{C}_u|}(\mathcal{T}_u(\mathcal{S})-1).
\end{align}
Replacing \eqref{eq1} in the expression provided in \lref{l1} and re-arranging the terms, the completion time for the $t$-th device can be finally expressed as:
\begin{align}
\mathcal{T}_u(\mathcal{S}) = \cfrac{|\mathcal{W}_u| + \mathcal{D}_u(\mathcal{S}) - \overline{\epsilon}_u\cfrac{|\mathcal{C}_u|-1}{|\mathcal{C}_u|}}{1-\overline{\epsilon}_u\cfrac{|\mathcal{C}_u|-1}{|\mathcal{C}_u|}}.
\end{align}
Thus, the expression for the overall completion time can be expressed as:
\begin{align}
\mathcal{C}(S) \approx \max_{i\in\mathcal{M}}\left\{ \mathcal{T}_u(\mathcal{S}) = \cfrac{|\mathcal{W}_u| + \mathcal{D}_u(\mathcal{S}) - \overline{\epsilon}_u\cfrac{|\mathcal{C}_u|-1}{|\mathcal{C}_u|}}{1-\overline{\epsilon}_u\cfrac{|\mathcal{C}_u|-1}{|\mathcal{C}_u|}} \right\}
\end{align}

\section{Proof of \thref{th2}}\label{app3}

The proof of this theorem first formulates the online completion time reduction problem as a joint optimization over the set of transmitting devices and their file combination. Afterward, using the definition of the critical set and the network topology, the problem can be reformulated as a constrained optimization wherein the objective function represents the set of transmitting devices and the constraint the file combinations. Finally, using the expression of the decoding delay provided in \cite{5456269}, the optimization problem is explicitly formulated.

The online optimization problem that reduce the probability of increase in the anticipated completion time is given in the following lemma:
\begin{lemma}
Let $\mathcal{A}$ be the set of transmitting devices and $\kappa_a(\mathcal{A}), a \in \mathcal{A}$ the file combination of the $a$-th device. The joint online optimization problem that reduces the completion time can be formulated as:
\begin{align}
\max_{\substack{ \mathcal{A} \in \mathcal{P}(\mathcal{U}) \\ \kappa_a(\mathcal{A}) \in \mathcal{P}(\mathcal{H}_{a})}} \left\{\prod_{u \in \mathcal{L}} \mathds{P} \left[d_u(\mathcal{A},\kappa_{a}(\mathcal{A}))=0\right]\right\}.
\end{align}
where $d_u(\mathcal{A},\kappa_{a}(\mathcal{A}))$ refers to the decoding delay increase of the $u$-th device when devices $a \in \mathcal{A}$ are transmitting the combination $\kappa_a(\mathcal{A})$.
\label{l3}
\end{lemma}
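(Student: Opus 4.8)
The plan is to translate the informal objective ``minimize the probability that the maximum anticipated completion time increases after transmission $t$'' into the product form of the lemma in three movements: (i) characterize exactly which devices can make the maximum increase; (ii) reduce the event ``the maximum increases'' to the event ``some critical device incurs a unit of decoding delay''; and (iii) factorize the complementary probability using the independence of the erasure processes across receivers.

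First I would record the one-step dynamics of the anticipated completion time. Since the cumulative decoding delay $\mathcal{D}_u(\cdot)$ can grow by at most one unit per transmission, \eref{eq:2} gives $\mathcal{T}_u(t) = \mathcal{T}_u(t-1) + \frac{d_u(\mathcal{A},\kappa_a(\mathcal{A}))}{1-\mathds{E}[\epsilon_u]}$ with $d_u \in \{0,1\}$, and in particular $\mathcal{T}_u(t)\le \mathcal{T}_u(t-1)+\frac{1}{1-\mathds{E}[\epsilon_u]}$. Because the delay is non-decreasing, $\max_{u\in\tilde{\mathcal{U}}}\mathcal{T}_u(t)\ge \max_{u\in\tilde{\mathcal{U}}}\mathcal{T}_u(t-1)$ always, and the maximum strictly increases only if some device ends the slot with $\mathcal{T}_u(t)$ exceeding the previous maximum; by the bound above the only candidates are exactly the devices in $\mathcal{L}$. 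Hence the event that the maximum increases is contained in $\bigcup_{u\in\mathcal{L}}\{d_u(\mathcal{A},\kappa_a(\mathcal{A}))=1\}$, and conversely any critical device that incurs a delay realizes such an increase, so the two events coincide up to a null event. Minimizing the probability of increase over $(\mathcal{A},\{\kappa_a(\mathcal{A})\})$ is therefore the same as maximizing $\mathds{P}\big[\bigcap_{u\in\mathcal{L}}\{d_u(\mathcal{A},\kappa_a(\mathcal{A}))=0\}\big]$.

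It remains to factorize this joint probability. Once $\mathcal{A}$ and the combinations $\kappa_a(\mathcal{A})$ are fixed, the only randomness left is in the channel erasures, and by the system model the erasure indicators seen by distinct receivers are mutually independent. For a given critical device $u$, the event $\{d_u=0\}$ is determined entirely by the erasures on the links incident to $u$ (whether $u$ hears exactly one transmission and, if so, whether that combination is instantly decodable for $u$) — a deterministic function of a block of erasure variables that is disjoint across the $u\in\mathcal{L}$. This yields $\mathds{P}\big[\bigcap_{u\in\mathcal{L}}\{d_u=0\}\big]=\prod_{u\in\mathcal{L}}\mathds{P}[d_u(\mathcal{A},\kappa_a(\mathcal{A}))=0]$, and taking the maximum over the decision variables gives the displayed formula. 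I would also note that this form automatically encodes the feasibility of the cooperation: if a critical device transmits, is in collision, or is out of range, then $d_u=1$ deterministically, so its factor is $0$ and such configurations are discarded by the maximization — which is precisely how the constraint $\mathcal{L}\cap(\mathcal{A}\cup\mathcal{I}\cup\mathcal{J})=\varnothing$ of \thref{th2} will later emerge.

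The main obstacle I anticipate is step (iii): making the independence argument airtight requires being precise that $d_u$ is a measurable function of \emph{only} the erasures of the links entering $u$, which means dissecting the decoding-delay definition into the ``hears exactly one transmission'' part (a function of the incident erasures, given $\mathcal{A}$) and the ``instantly decodable'' part (a deterministic function of $\mathcal{A}$, the combinations, and the Has/Wants sets, carrying no randomness), and then observing that these link-blocks are disjoint over receivers so the product rule for independent events applies. A secondary subtlety is the circular-looking appearance of $\max_{u'}\mathcal{T}_{u'}(t)$ inside the definition of $\mathcal{L}$; I would resolve it by using the previous-slot maximum $\max_{u'}\mathcal{T}_{u'}(t-1)$ as the threshold and noting the two maxima can differ only through the critical devices, so the set of devices that can push the maximum upward is unchanged.
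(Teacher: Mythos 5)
Your proposal is correct and follows essentially the same route as the paper's own proof: reduce the event ``the maximum anticipated completion time increases'' to ``some device in the critical set $\mathcal{L}$ incurs a decoding delay,'' pass to the complement, and factorize $\mathds{P}\bigl[d_u=0,\ \forall u\in\mathcal{L}\bigr]$ into $\prod_{u\in\mathcal{L}}\mathds{P}[d_u=0]$ before turning the minimization into the stated maximization. Your added care on the independence of the per-receiver erasure blocks and on resolving the threshold in the definition of $\mathcal{L}$ via the slot-$(t-1)$ maximum only makes explicit what the paper leaves implicit.
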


\begin{proof}
The lemma is shown by expressing the probability of the anticipated completion time increase . The joint optimization over the set of transmitting devices and the file combinations is formulated as a minimization of the probability increase in the anticipated completion time. The complete proof can be found in \appref{app5}.
\end{proof}

Let $\mathcal{O}_u(\mathcal{A})$ be the opportunity zone of the $u$-th device defined such that devices in that zone can hear the transmission from the $u$-th device and decode a file from it. The mathematical definition of this opportunity zone is given by the equation below:
\begin{align}
\mathcal{O}_u(\mathcal{A}) = \tilde{\mathcal{U}} \cap \left( \mathcal{C}_u \setminus \left( \mathcal{A} \cup \mathcal{I} \right) \right)
\end{align}

Let $\tau_u(\kappa_u(\mathcal{A}))$ be the set of targeted devices by the $u$-th device when sending the file combination $\kappa_u$ and devices in $\mathcal{A}$ are transmitting. From \cite{5456269}, the distribution of the decoding delay $d_u(\mathcal{A},\kappa_{a}(\mathcal{A}))$ is given by:
\begin{align}\label{eq6}
\mathds{P}[d_u(\mathcal{A},\kappa_{a}(\mathcal{A})) = 0] =
\begin{cases}
1  &\text{ if } u \in \mathcal{U} \setminus \tilde{\mathcal{U}}\\
0  &\text{ if } u \in (\mathcal{A} \cup \mathcal{I} \cup \mathcal{J}) \cap \tilde{\mathcal{U}} \\
1  &\text{ if } u \in \mathcal{O}_{a}(\mathcal{A})\cap \tau_{a}(\kappa_{a}(\mathcal{A})) \\
\epsilon_{au} &\text{ if } i \in \mathcal{O}_{a}(\mathcal{A})  \setminus \tau_{a}(\kappa_{a}(\mathcal{A}))
\end{cases}
\end{align}

Using the expressions of the decoding delay increment in \eref{eq6} the probability that all the devices in the critical set $\mathcal{L}$ do not experience a decoding delay can be expressed as:
\begin{align}
&\prod_{u \in \mathcal{L}} \mathds{P} \left[d_u(\mathcal{A},\kappa_{a}(\mathcal{A}))=0\right] = \nonumber \\
&\begin{cases} 
 0 &\text{ if } \mathcal{A} \notin \mathcal{R}(\mathcal{L}) \\
\prod\limits_{a \in \mathcal{A}} \prod\limits_{u \in \mathcal{L} \cap (\mathcal{O}_{a}(\mathcal{A}) \setminus \tau_{a}(\kappa_{a}(\mathcal{A})))} \epsilon_{au} &\text{ if } \mathcal{A} \in \mathcal{R}(\mathcal{L})  
\end{cases},
\end{align}
where the set $\mathcal{R}(\mathcal{L}) = \{ \mathcal{A} \in \mathcal{P}(\mathcal{U}) | \mathcal{L} \cap \tilde{\mathcal{U}} \cap (\mathcal{A} \cup \mathcal{I} \cup \mathcal{J} ) = \varnothing \}$ represents the set of feasible combination of devices. Clearly, a combination of devices that is not included in the feasible set increase the expected completion time with probability $1$ in the considered transmission. Therefore, the problem of finding the optimal set of transmitting devices and their optimal file combinations can be written as:
\begin{align}
&\max_{\substack{ \mathcal{A} \in \mathcal{P}(\mathcal{U}) \\ \kappa_a(\mathcal{A}) \in \mathcal{P}(\mathcal{H}_{a})}} \left\{\prod_{u \in \mathcal{L}} \mathds{P} \left[d_u(\mathcal{A},\kappa_{a}(\mathcal{A}))=0\right]\right\} \nonumber \\
& \max_{\substack{ \mathcal{A} \in \mathcal{R}(\mathcal{L}) \\ \kappa_a(\mathcal{A}) \in \mathcal{P}(\mathcal{H}_{a})}} \left\{\prod\limits_{a \in \mathcal{A}} \prod\limits_{u \in \mathcal{L} \cap (\mathcal{O}_{a}(\mathcal{A}) \setminus \tau_{a}(\kappa_{a}(\mathcal{A})))} \epsilon_{au} \right\} \nonumber \\
& \max_{\substack{ \mathcal{A} \in \mathcal{R}(\mathcal{L}) \\ \kappa_a(\mathcal{A}) \in \mathcal{P}(\mathcal{H}_{a})}} \left\{\sum\limits_{a \in \mathcal{A}} \sum\limits_{u \in \mathcal{L} \cap (\mathcal{O}_{a}(\mathcal{A}) \setminus \tau_{a}(\kappa_{a}(\mathcal{A})))} \log(\epsilon_{au}) \right\} \nonumber \\
& \min_{\substack{ \mathcal{A} \in \mathcal{R}(\mathcal{L}) \\ \kappa_a(\mathcal{A}) \in \mathcal{P}(\mathcal{H}_{a})}} \left\{\sum\limits_{a \in \mathcal{A}} \sum\limits_{u \in \mathcal{L} \cap \tau_{a}(\kappa_{a}(\mathcal{A}))} \log(\epsilon_{au}) \right\} \nonumber \\
& \max_{\substack{ \mathcal{A} \in \mathcal{R}(\mathcal{L}) \\ \kappa_a(\mathcal{A}) \in \mathcal{P}(\mathcal{H}_{a})}} \left\{\sum\limits_{a \in \mathcal{A}} \sum\limits_{u \in \mathcal{L} \cap \tau_{a}(\kappa_{a}(\mathcal{A}))} \log\left(\cfrac{1}{\epsilon_{au}}\right) \right\}
\end{align}

Let $y(\kappa_{a}(\mathcal{A})) = \sum\limits_{u \in \mathcal{L} \cap \tau_{a}(\kappa_{a}(\mathcal{A}))} \log \cfrac{1}{\epsilon_{au}}$. Therefore, the problem of finding the optimal set of transmitting devices $\mathcal{A}$ and their file combination $\kappa_{a}(\mathcal{A}), \ \forall\ a \in \mathcal{A}$ can be expressed as a constrained optimization as follows:
\begin{subequations}
\begin{align}
\mathcal{A}^* &= \arg \max_{ \mathcal{A} \in \mathcal{R}(\mathcal{L}) } \sum\limits_{a \in \mathcal{A}} y(\kappa_{a}^*(\mathcal{A})) \\
\text{subject to } \kappa_{a}^*(\mathcal{A}) &= \arg \max_{\kappa_{a}(\mathcal{A}) \in \mathcal{P}(\mathcal{H}_{a})} y(\kappa_{a}(\mathcal{A}))  \\
y(\kappa_{a}(\mathcal{A})) &= \sum\limits_{u \in \mathcal{L} \cap \tau_{a}(\kappa_{a}(\mathcal{A}))} \log \cfrac{1}{\epsilon_{au}}  
\end{align} 
\end{subequations}

\section{Proof of \thref{th3}}\label{app4}

Under the interference-free transmissions constraint, the authors in \cite{5456269} show that the optimization over the set of transmitting devices and their file combination can be separated. In other words, the optimization problem is written as:
\begin{subequations}
\label{eq:th3} 
\begin{align}
\mathcal{A}^* = \arg &\max_{ \mathcal{A} \in \mathcal{R}(\mathcal{L}) \cap \mathcal{N} } \sum\limits_{a \in \mathcal{A}} \sum\limits_{u \in \mathcal{L} \cap \tau_{a}(\kappa_{a}^* )} \log \cfrac{1}{\epsilon_{au}} \label{eq:th2} \\
\text{subject to } \kappa_{a}^* &= \arg \max_{\kappa_{a} \in \mathcal{P}(\mathcal{H}_{a})} \sum\limits_{u \in \mathcal{L} \cap \tau_{a}(\kappa_{a})} \log \cfrac{1}{\epsilon_{au}} \label{eq:th1}
\end{align} 
\end{subequations}

The formulation in \eref{eq:th1} shows that the file combination does not depend on the set of transmitting devices and thus it can be solved for all devices. Following similar steps than the one used in \cite{13051412}, the optimal file combination that the $u$-th device can generate is obtained by solving a maximum weight clique problem over the multi-layer IDNC graph. However, while reference \cite{13051412} considers that the $u$-th devices can target all other devices, devices represented by vertices in the proposed multi-layer are those in the transmission range of the $u$-th device. Furthermore, according to the expression of the anticipated completion time in \thref{th1}, the construction of the $n$-th layer is given by:
\begin{itemize}
\item $\mathcal{T}_{u^\prime}(t-1)+\cfrac{n}{1-\overline{\epsilon}_{u^\prime}\cfrac{|\mathcal{C}_{u^\prime}|-1}{|\mathcal{C}_{u^\prime}|}} > \max\limits_{u^{\prime \prime} \in \tilde{\mathcal{U}}} \left(\mathcal{T}_{u^{\prime \prime}}(t-1)\right)$
\item $\mathcal{T}_{u^\prime}(t-1)+\cfrac{n-1}{1-\overline{\epsilon}_{u^\prime}\cfrac{|\mathcal{C}_{u^\prime}|-1}{|\mathcal{C}_{u^\prime}|}} \leq \max\limits_{u^{\prime \prime} \in \tilde{\mathcal{U}}} \left(\mathcal{T}_{u^{\prime \prime}}(t-1)\right)$
\end{itemize}

Finally, the weight of vertex $v_{{u^\prime}f}$ in the multi-layer IDNC graph is given by:
\begin{align}
w(v_{u^\prime f}) = - \log (\epsilon_{u u^\prime}).
\end{align}

After computing the optimal file combination $\kappa_{u}^*$, let $y^*_{u} = \sum\limits_{u^\prime \in \mathcal{L} \cap \tau_{u}(\kappa_{u}^* )} -\log {\epsilon_{u u^\prime}}$ be the contribution of the $u$-th device to the network. The problem of selecting the set of transmitting devices becomes:
\begin{align}
\max_{ \mathcal{A} \in \mathcal{R}(\mathcal{L}) \cap \mathcal{N} } \sum\limits_{u \in \mathcal{A}} y^*_{u}.
\label{eq:th4} 
\end{align} 

Clearly, each solution to the above problem represents a clique in the cooperation graph. Indeed assume a couple of vertices is not connected, then the resulting set $\mathcal{A}$ violate the interference-free constraint. Similarly, vertices $v_u$ that represent a clique in the graph are a valid solution to the optimization \eref{eq:th4}. Finally, the weight of each clique coincides with the objective function of its corresponding set $\mathcal{A}$. Therefore, the optimal solution is the maximum weight clique in the cooperation graph wherein the weight of each vertex $v_{u}$ is:
\begin{align}
w(v_u) = y^*_{u} = \sum\limits_{u^\prime \in \mathcal{L} \cap \tau_{u}(\kappa_{u})} \log \cfrac{1}{\epsilon_{uu^\prime}},
\end{align}

\section{Proof of \thref{th4}}\label{app6}

The theorem is established by showing a one-to-one mapping between all feasible set of transmitting devices and the set of clusters. Afterward, the local IDNC graph is extended to find the optimal file combination for a given cluster. Finally, the joint optimization problem is reformulated in terms of the non-interfering clusters and solved using the results of \sref{sec:int}.

The following preposition shows there is a bijection between the set of transmitting devices and a set of clusters verifying certain properties:
\begin{preposition}
For each set of transmitting devices $\mathcal{A}$, there exists a unique set $\mathfrak{Z} \subset \mathbf{Z}$ satisfying the following constraints:
\begin{subequations}
\begin{align}
\bigoplus_{\mathcal{Z} \in \mathfrak{Z}} \mathcal{Z} &= \mathcal{A} \label{eq:th13} \\ 
\mathcal{C}^T (\mathcal{Z}) \cap \mathcal{C}^T (\mathcal{Z}^\prime) &= \varnothing,\ \forall\ \mathcal{Z} \neq \mathcal{Z}^\prime \in \mathfrak{Z} \label{eq:th12} \\
\mathcal{C}^T(Z) \cap \mathcal{C}^T (\mathcal{Z} \setminus Z ) &\neq \varnothing,\ \forall \ Z \subset \mathcal{Z}, \mathcal{Z} \in \mathfrak{Z},  \label{eq:th11}
\end{align}
\end{subequations}
\end{preposition}

\begin{proof}
The proof of this preposition is omitted as it mirrors the steps used in proving Lemma 2 in \cite{9745154}.
\end{proof}

Using the preposition above, showing the one-to-one mapping between the set of feasible transmitting devices and the set of cliques in the extended cooperation graph boils down to proving that the corresponding $\mathfrak{Z}$ is a clique.

Let $\mathcal{A} \in \mathcal{N} \cap \mathcal{R}(\mathcal{L})$ be a feasible set of transmitting devices. Given that all clusters $\mathcal{Z} \in \mathcal{N} \cap \mathcal{R}(\mathcal{L})$ verifying \eref{eq:th11} are generated, then there exists a subset of $\mathbf{Z}$ verifying \eref{eq:th13}. Furthermore, given that the connectivity condition of the extended cooperation graph matches the constraint \eref{eq:th12}, then all vertices in $\mathbf{Z}$ are connected. Finally, we conclude there is a one-to-one mapping between the set of feasible transmitting devices and the cliques $\mathbf{Z}$ in the extended cooperation graph.

Using the bijection above, the completion time joint optimization problem can be reformulated as follows:
\begin{subequations}
\begin{align}
&\max_{\mathfrak{Z}} \sum_{\mathcal{Z} \in \mathfrak{Z}} \sum_{u \in \mathcal{Z}} \sum_{u^\prime \in \tau_u(\kappa_u^*(\mathfrak{Z}))} \cfrac{1}{\log(\epsilon_{uu^\prime})} \label{eqth14} \\
\text{s.t. } \kappa_u^*(\mathfrak{Z}) &= \arg \max_{\kappa_{u}(\mathfrak{Z}) \in \mathcal{P}(\mathcal{H}_{u})} \sum_{u^\prime \in \tau_u(\kappa_u^*(\mathfrak{Z}))} \cfrac{1}{\log(\epsilon_{uu^\prime})} \label{eqth15}
\end{align}
\end{subequations}

Due to the interference-free cluster generation given in \eref{eq:th12}, the file combination only depends on its own cluster, i.e., $\kappa_u^*(\mathfrak{Z}) = \kappa_u^*(\mathcal{Z})$. Such property allows the separation of both problems as follows:
\begin{subequations}
\begin{align}
&\max_{\mathfrak{Z}} \sum_{\mathcal{Z} \in \mathfrak{Z}} \sum_{u \in \mathcal{Z}} \sum_{u^\prime \in \tau_u(\kappa_u^*(\mathcal{Z}))} \cfrac{1}{\log(\epsilon_{uu^\prime})} \label{eqth16} \\
\text{s.t. } \kappa_u^*(\mathfrak{Z}) &= \arg \max_{\kappa_{u}(\mathcal{Z}) \in \mathcal{P}(\mathcal{H}_{u})} \sum_{u^\prime \in \tau_u(\kappa_u^*(\mathfrak{Z}))} \cfrac{1}{\log(\epsilon_{uu^\prime})} \label{eqth17}
\end{align}
\end{subequations}

The difference between the new file combination optimization problem \eref{eqth17} and the one proposed in \sref{sec:int} is that some devices are transmitting and some are in interference. Therefore, the optimal file combination is obtained by solving the maximum weight clique in the extended multi-layer graph that excludes those devices. Finally, using the results of \thref{th3}, the optimal solution to the joint optimization problem \eref{eq:4} is equivalent to a maximum weight clique in the extended cooperative graph wherein the weight vertex $v$ corresponding to cluster $\mathcal{Z}$ is given by:
\begin{align}
w(v) = \sum_{u \in \mathcal{Z}} \sum_{u^\prime \in \tau_u(\kappa_u(\mathcal{Z}))} \cfrac{1}{\log(\epsilon_{uu^\prime})}
\end{align}
and $\kappa_u(\mathcal{Z})$ is obtained by solving the maximum weight clique problem in the extended multi-layer IDNC graph $\mathcal{G}_u(\mathcal{Z})$ wherein the weight of vertex $v_{u^\prime f}$ is:
\begin{align}
w(v_{u^\prime f}) = - \log (\epsilon_{u u^\prime}).
\end{align}

\section{Proof of Auxiliary Lemmas}\label{app5}

\subsection{Proof of \lref{l2}} 

The lemma is proven by first showing that at least a single device $\mathcal{C}_u$ transmits at each time instant. Secondly, it is shown that, under the interference-free assumption $\mathcal{I} = \varnothing$, no more than a single device is allowed to transmit. Finally, extending the equally likely transmission property of all devices to the coverage zone $\mathcal{C}_u$ concludes the proof.

Assume that for a given time slot $t$, none of the devices in $\mathcal{C}_u$ is transmitting, i.e., $\mathcal{A} \cap \mathcal{C}_u = \varnothing$. From the symmetry of the connectivity matrix, the following holds:
\begin{align}
\forall \ u^\prime \notin \mathcal{C}_u, u \notin \mathcal{C}_{u^\prime}, \Rightarrow u \notin \bigcup_{{u^\prime} \in \mathcal{U} \setminus \mathcal{C}_u} \mathcal{C}_{u^\prime}
\end{align}
The total coverage zone of the transmitting devices is $\mathcal{C}^T(\mathcal{A})$ defined by:
\begin{align}
\mathcal{C}^T(\mathcal{A}) = \bigcup\limits_{{u^\prime} \in \mathcal{A}} \mathcal{C}_{u^\prime} \subseteq \bigcup\limits_{{u^\prime} \in \mathcal{U} \setminus \mathcal{C}_u} \mathcal{C}_{u^\prime}.
\end{align}
By definition of the set $\mathcal{J}$, we have:
\begin{align}
u \notin \bigcup_{{u^\prime} \in \mathcal{U} \setminus \mathcal{C}_u} \mathcal{C}_{u^\prime} \Rightarrow u \notin \mathcal{C}^T(A) \Rightarrow i \in \mathcal{J} \Rightarrow \mathcal{J} \neq \varnothing.
\end{align}

However, by assumption, transmission in which the $u$-th device is out of the transmission range of the transmitting devices are negligible. Therefore, at least a single device $\mathcal{C}_u$ transmits at each time instant.

Now assume that at least $2$ devices $u_1$ and $u_2$ from $\mathcal{C}_u$ transmit simultaneously. By definition of the interference region $\mathcal{I}$ and the symmetry of the connectivity matrix, the following hold:
\begin{align}
u \in \mathcal{C}_{u_1} \cap \mathcal{C}_{u_2} \text{ with } (u_1,u_2) \in \mathcal{A}.
\end{align}
Since the transmission of interest occurs before the $\mathcal{T}_u(\mathcal{S})$-th time slot, then the $u$-th device is still missing files, i.e., $u \in \tilde{\mathcal{U}}$. Therefore, $u \in \mathcal{I} \Rightarrow \mathcal{I} \neq \varnothing$ which contradicts with the initial assumption of interference-free transmissions.

Finally, given that all devices in the network are equally likely to transmit and that at each time instant a single device from $\mathcal{C}_u$ is allowed to transmit, then all devices in $\mathcal{C}_u$ are also equally likely to transmit.

\subsection{Proof of \lref{l3}} 

Since finding the optimal schedule $S^*$ for the whole recovery phase is intractable, this paper proposes finding the schedule that minimize the probability of increase in the expected completion time at each transmission. Formally, the set of transmitting devices $\mathcal{A}$ and the coded file combination $\kappa_{a}(\mathcal{A})$ are chosen such that:
\begin{align}
\min_{\substack{ \mathcal{A} \in \mathcal{P}(\mathcal{U}) \\ \kappa_a(\mathcal{A}) \in \mathcal{P}(\mathcal{H}_{a})}} \left\{\mathds{P}\left[\max_{u\in \tilde{\mathcal{U}}} \left\{\mathcal{T}_u(t)\right\} > \max_{u\in \tilde{\mathcal{U}}} \left\{\mathcal{T}_u(t-1)\right\}\right] \right\} \nonumber
\end{align}

Clearly, not all devices in $\mathcal{U}$ are able to increase the expected completion time even if they experience a decoding delay for the transmission at time $t$. Let $\mathcal{L}$ be the set of devices that are able to increase the expected completion time at the transmission at time $t$ if they experience a decoding delay. The mathematical definition of this set is given below:
\begin{align}
&\mathcal{L}= \left\{u \in \tilde{\mathcal{U}} \bigg| \mathcal{T}_u(t-1) + \frac{1}{1-\overline{\epsilon}_u\cfrac{|\mathcal{C}_u|-1}{|\mathcal{C}_u|}} \geq  \max_{u^\prime \in \tilde{\mathcal{U}}}\left(\mathcal{T}_{u^\prime}(t)\right)\right\}, \nonumber
\end{align}

Such set $\mathcal{L}$ is called the \emph{critical set} as only devices in this set play a role in the optimization problem and are enable to increase the expected completion time at the transmission at time $t$.

According the definition of $\mathcal{T}_u(t)$ in \eref{eq:3}, devices $u \in \mathcal{L}$ would not increase $\max_{u\in\tilde{\mathcal{U}}}\left\{\mathcal{T}_u(t)\right\}$ only if they do not experience a decoding delay in the transmission at time $t$. Therefore, the probability that the completion time does not increase at time $t$ can be expressed as :
\begin{align}
\mathds{P}&\left[\max_{ u \in \tilde{\mathcal{U}}}\left\{\mathcal{T}_u(t)\right\} = \max_{u \in \tilde{\mathcal{U}} }\left\{\mathcal{T}_u(t-1)\right\}\right] \nonumber \\
& = \mathds{P}\left[\max_{ u \in \mathcal{L}}\left\{\mathcal{T}_u(t)\right\} = \max_{u \in \tilde{\mathcal{U}} }\left\{\mathcal{T}_u(t-1)\right\}\right] \nonumber \\
& =  \mathds{P} \left[d_u(\mathcal{A},\kappa_{a}(\mathcal{A}))=0, \forall u \in \mathcal{L} \right] \nonumber \\
& = \prod_{u \in \mathcal{L}}  \mathds{P} \left[d_u(\mathcal{A},\kappa_{a}(\mathcal{A}))=0 \right]
\end{align}

Hence, the joint online optimization over the set of transmitting devices $\mathcal{A}$ and the coded file combination $\kappa_{a}(\mathcal{A})$ is given by the following expression:
\begin{align}
&\min_{\substack{ \mathcal{A} \in \mathcal{P}(\mathcal{U}) \\ \kappa_a(\mathcal{A}) \in \mathcal{P}(\mathcal{H}_{a})}} \left\{\mathds{P}\left[\max_{u\in \tilde{\mathcal{U}}} \left\{\mathcal{T}_u(t)\right\} > \max_{u\in \tilde{\mathcal{U}}} \left\{\mathcal{T}_u(t-1)\right\}\right] \right\} \nonumber \\
&\min_{\substack{ \mathcal{A} \in \mathcal{P}(\mathcal{U}) \\ \kappa_a(\mathcal{A}) \in \mathcal{P}(\mathcal{H}_{a})}} \left\{ 1 - \mathds{P}\left[\max_{ u \in \tilde{\mathcal{U}}}\left\{\mathcal{T}_u(t)\right\} = \max_{u \in \tilde{\mathcal{U}} }\left\{\mathcal{T}_u(t-1)\right\}\right] \right\} \nonumber \\
&\max_{\substack{ \mathcal{A} \in \mathcal{P}(\mathcal{U}) \\ \kappa_a(\mathcal{A}) \in \mathcal{P}(\mathcal{H}_{a})}} \left\{  \mathds{P}\left[\max_{ u \in \tilde{\mathcal{U}}}\left\{\mathcal{T}_u(t)\right\} = \max_{u \in \tilde{\mathcal{U}} }\left\{\mathcal{T}_u(t-1)\right\}\right] \right\} \nonumber \\
&\max_{\substack{ \mathcal{A} \in \mathcal{P}(\mathcal{U}) \\ \kappa_a(\mathcal{A}) \in \mathcal{P}(\mathcal{H}_{a})}} \left\{ \prod_{u \in \mathcal{L}}  \mathds{P} \left[d_u(\mathcal{A},\kappa_{a}(\mathcal{A}))=0 \right]\right\}
\end{align}

\bibliographystyle{IEEEtran}
\bibliography{references}

\end{document}